\renewcommand{\baselinestretch}{1.10}
\newcommand{\OPT}{OPT}
\newcommand{\ds}{\displaystyle}
\newtheorem{theorem}{Theorem}
\newtheorem{lemma}{Lemma}
\newtheorem{claim}{Claim}
\newtheorem{corollary}{Corollary}
\title{Minimum Entropy Combinatorial Optimization Problems\footnote{
A preliminary version of the work appeared in \cite{CiE}.
}}
\author{
Jean Cardinal\thanks{Universit\'e Libre de Bruxelles, D\'epartement d'Informatique, c.p.~212, B-1050 Brussels, Belgium, jcardin@ulb.ac.be.}
 \and 
Samuel Fiorini\thanks{Universit\'e Libre de Bruxelles, D\'epartement de Math\'ematique, c.p. 216,  B-1050 Brussels, Belgium,  sfiorini@ulb.ac.be.}
\and 
Gwena\"el Joret\thanks{Universit\'e Libre de Bruxelles, D\'epartement d'Informatique, c.p.~212,  B-1050 Brussels, Belgium, gjoret@ulb.ac.be. G. Joret is a Postdoctoral Researcher of the Fonds National de la Recherche Scientifique (F.R.S. -- FNRS).}
}
\date{}
\begin{document}
\maketitle
\sloppy

\begin{abstract}
We survey recent results on combinatorial optimization problems in which the objective function is the entropy of a discrete distribution.
These include the minimum entropy set cover, minimum entropy orientation, and minimum entropy coloring problems. 
\end{abstract}

\section{Introduction}

Set covering and graph coloring problems are undoubtedly among the most fundamental discrete optimization problems, and countless variants of these have been studied in the last 30 years. We discuss several coloring and covering problems in which the objective function is a quantity of information expressed in bits. More precisely, the objective function is the Shannon entropy of a discrete probability distribution defined by the solution. These problems are motivated by applications as diverse as computational biology, data compression, and sorting algorithms.\medskip

Recall that the entropy of a discrete random variable $X$ with probability distribution $\{p_i\}$, where $p_i := P[X=i]$, is defined as:
$$
H(X)= - \sum_i p_i \log p_i.
$$
Here, logarithms are in base 2, thus entropies are measured in bits; and $0\log 0 := 0$. From Shannon's theorem~\cite{Shanno48}, the entropy is the minimum average number of bits needed to transmit a random variable on an error-free communication channel.\medskip

\begin{figure}[t]
\begin{center}
\subfigure[\label{fig:exsc}set cover]{\includegraphics[scale=.33]{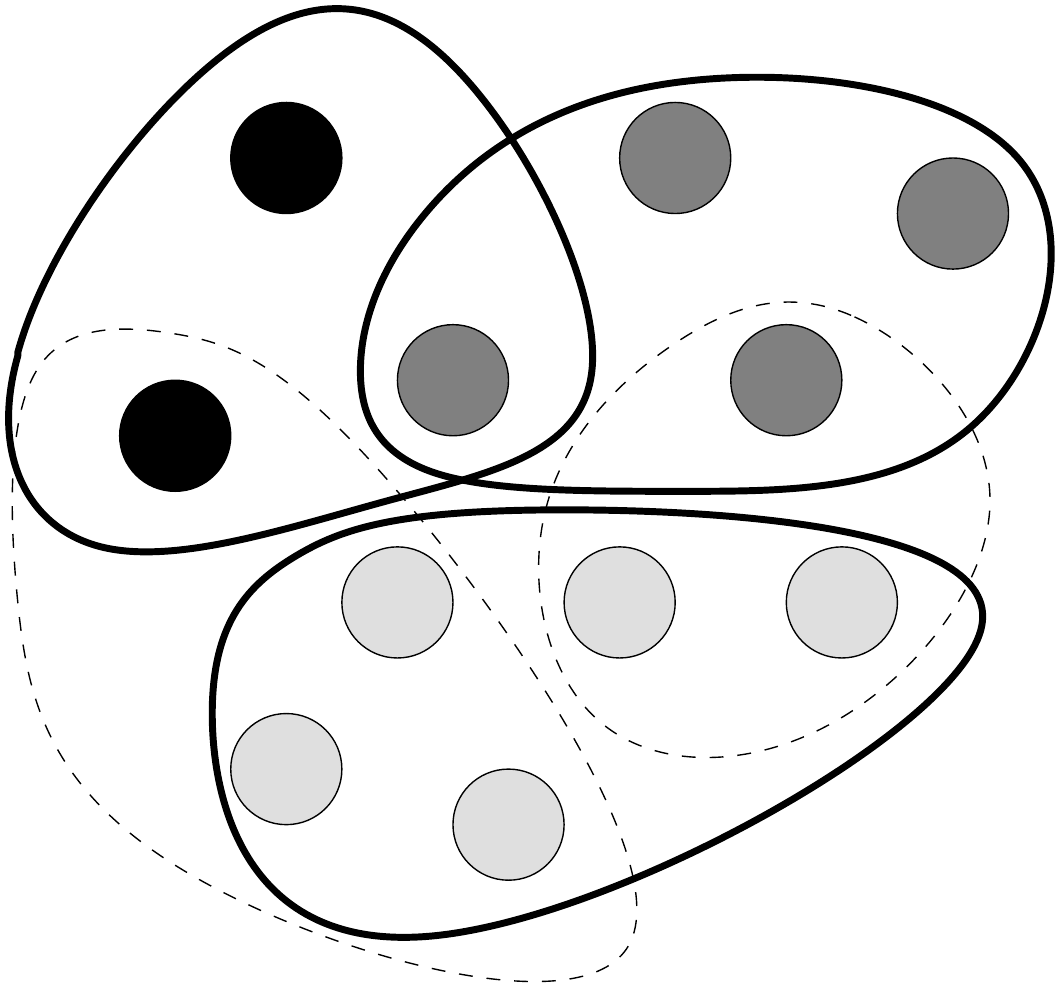}} \hspace{1cm} 
\subfigure[\label{fig:exor}orientation]{\includegraphics[scale=.4]{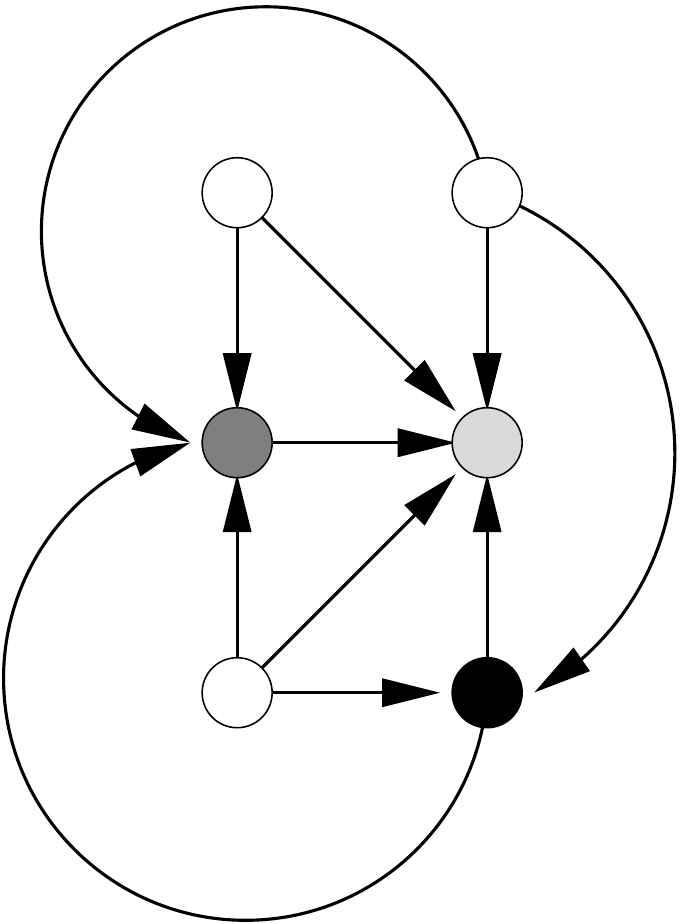}} \hspace{1cm}
\subfigure[\label{fig:excol}coloring]{\includegraphics[scale=.4, angle=90]{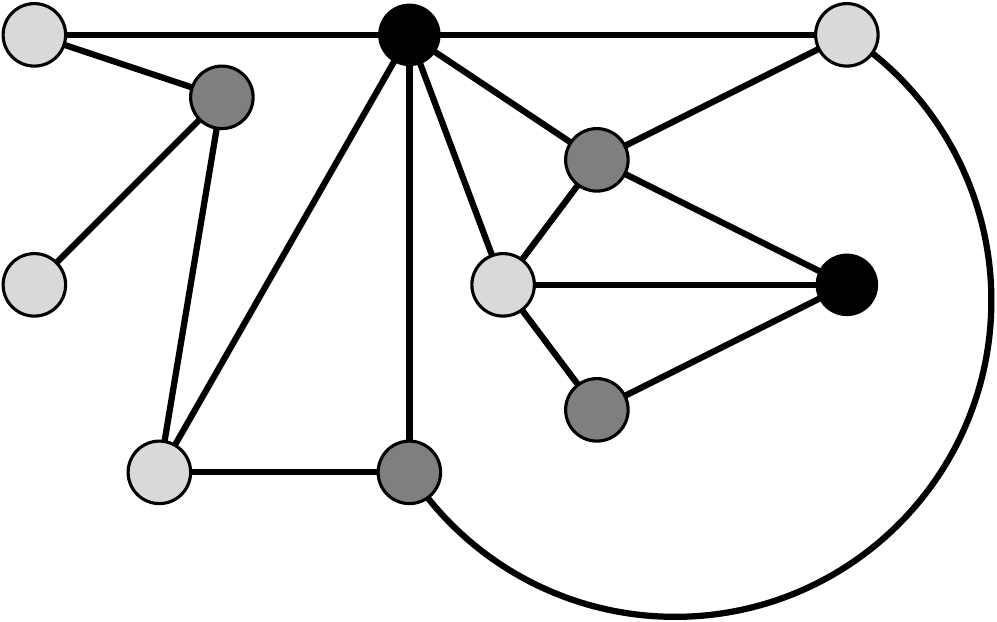}} \hspace{1cm}
\end{center}
\caption{\label{fig:ex}Instances of the minimum entropy combinatorial optimization problems studied in this paper, together
with feasible solutions. The resulting probability distribution for the given solutions is $\{ 5/11, 4/11, 2/11 \}$.}
\end{figure}

We give an overview of the recent hardness and approximability results 
obtained by the authors on three minimum entropy combinatorial optimization problems~\cite{CFJ07,CFJ08b,CFJ08a}. We also provide new approximability results on the minimum entropy orientation and minimum entropy coloring problems (Theorems~\ref{thm:ctapx}, \ref{hardchord}, \ref{thm:greedcol}, and \ref{thm:interv1}). Finally, we present a recent result that quantifies how well the entropy of a perfect graph is approximated by its chromatic entropy, with an application to a sorting problem~\cite{POP_SICOMP}.

\section{Minimum Entropy Set Cover}

In the well-known minimum set cover problem, we are given a ground set $U$ and a collection $\mathcal{S}$ of subsets of $U$, and we ask what is the minimum number of subsets from $\mathcal{S}$ such that their union is $U$. A famous heuristic for this problem is the greedy algorithm: iteratively choose the subset covering the largest number of remaining elements. The greedy algorithm is known to approximate the minimum set cover problem within a $1 + \ln n$ factor, where $n := |U|$. It is also known that this is essentially the best approximation ratio achievable by a polynomial-time algorithm, unless NP has slightly super-polynomial time algorithms~\cite{F98}.\medskip

In the minimum entropy set cover problem, the cardinality measure is replaced by the entropy of a partition of $U$ compatible with a given covering. The function to be minimized is the quantity of information contained in the random variable that assigns to an element of $U$ chosen uniformly at random, the subset that covers it. This is illustrated in Figure~\ref{fig:exsc}.
A formal definition of the minimum entropy set cover problem is as follows~\cite{HK05,CFJ08a}.
\begin{description}
\item[\sc instance:] A ground set $U$ and a collection $\mathcal{S} = \{S_1,\ldots,S_k\}$ of subsets of $U$
\item[\sc solution:] An assignment $\phi : U \to \{1,\ldots,k\}$ such that $x \in S_{\phi(x)}$ for all $x \in U$
\item[\sc objective:] Minimize the entropy $- \sum_{i=1}^k p_i \log p_i$, where $p_i := |\phi^{-1}(i)| / |U|$
\end{description}
Intuitively, we seek a covering of $U$ yielding part sizes that are either large or small, but somehow as nonuniform as possible. Also, an arbitrarily small entropy can be reached using an arbitrarily large number of subsets, making the problem quite distinct from the minimum set cover problem.

\paragraph{Applications.}

The original paper from Halperin and Karp~\cite{HK05} was motivated by applications in computational biology, namely haplotype reconstruction. In an abstract setting, we are given a collection of objects (that is, the set $U$) and, for each object, a collection of classes to which it may belong (that is, the collection of sets $S_i$ containing the object). Assuming that the objects are selected at random from a larger population, we wish to assign to each object the most likely class it belongs to.

Consider an assignment $\phi$, and suppose $q_i$ is the probability that a random object actually belongs to class $i$. We aim at maximizing the product of the probabilities for the solution $\phi$:
$$
\prod_{x\in U} q_{\phi (x)} .
$$
Now note that if $\phi$ is an optimal solution (supposing we know the values $q_i$), then the value $p_i := |\phi^{-1}(i)| / |U|$ is a maximum likelihood estimator of the actual probability $q_i$.
Thus, since the probabilities $q_i$ are unknown, we may replace $q_i$ by its estimated value:
\begin{equation}
\prod_{x\in U} p_{\phi (x)} = \prod_{i=1}^k p_i^{ |\phi^{-1}(i)|}.
\end{equation}
Maximizing this function is equivalent to minimizing the entropy $- \sum_{i=1}^k p_i \log p_i$, leading to the minimum entropy set cover problem.
In the haplotype phasing problem, objects and classes are genotypes and haplotypes, repectively. In a simplified model, a genotype can be modeled as a string in the alphabet $\{ 0, 1, ?\}$, and a haplotype as a binary string (see Figure~\ref{fig:haplotyping}). A haplotype {\em explains} or is {\em compatible} with a genotype if it matches it on every non-? position, and we aim at finding the maximum likelihood assignment of genotypes to compatible haplotypes. It is therefore a special case of the minimum entropy set cover problem. Experimental results derived from this work were proposed by Bonizzoni {\em et al.}~\cite{BVDM05}, and Gusev, M\u{a}ndoiu, and Pa\c{s}aniuc~\cite{GMIP08}.

\begin{figure}
\begin{center}
\includegraphics[scale=.5]{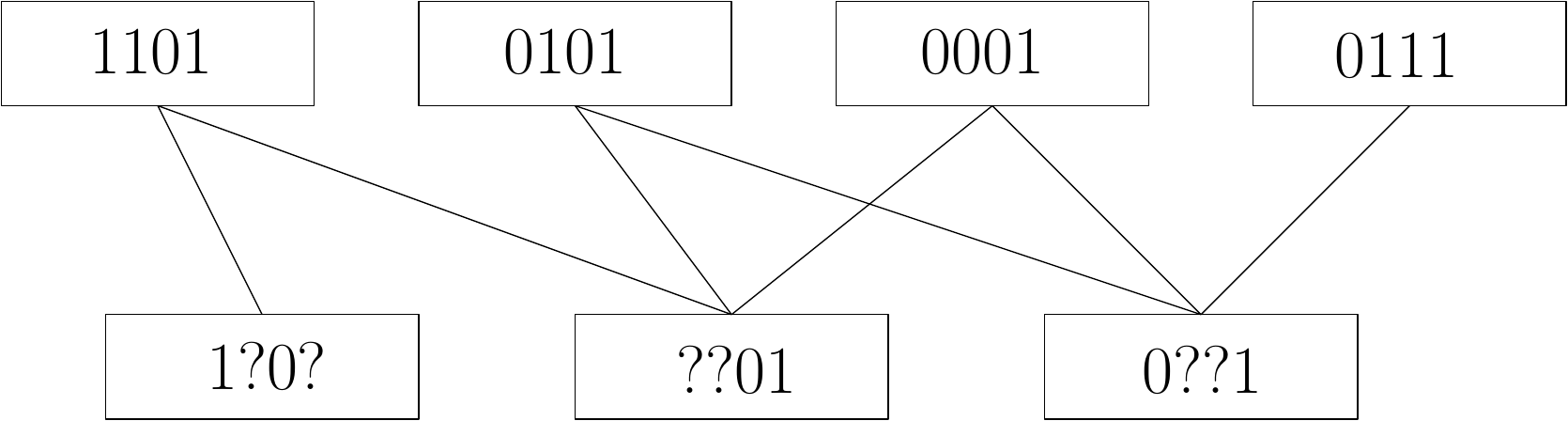}
\end{center}
\caption{\label{fig:haplotyping}The haplotype phasing problem (simplified setting).}
\end{figure}

\paragraph{Results.}

We proved that the greedy algorithm performs much better for the minimum entropy version of the set cover problem. Furthermore, we gave a complete characterization of the approximability of the problem under the $P\not= NP$ hypothesis.

\begin{theorem}[\cite{CFJ08a}]
\label{thm:apxsc}
The greedy algorithm approximates the minimum entropy set cover problem within an additive error of $\log e$ ($\approx 1.4427$) bits. Moreover, for every $\epsilon > 0$, it is NP-hard to approximate the problem within an additive error of $\log e - \epsilon$ bits.
\end{theorem}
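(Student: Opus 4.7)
My plan is a charging argument for the greedy algorithm. Let $G_1, G_2, \ldots$ be the sets picked by greedy in order and, for every $x \in U$, let $j(x)$ denote the step at which $x$ is first covered and $g_j := |G_j \setminus (G_1 \cup \cdots \cup G_{j-1})|$ the size of the new batch at step $j$. Writing entropy per element gives
\[
H(\text{greedy}) \;=\; \sum_j \frac{g_j}{n}\log\frac{n}{g_j} \;=\; \frac{1}{n}\sum_{x \in U}\log\frac{n}{g_{j(x)}}.
\]
Now fix an optimal assignment $\phi^{\ast}$ with part sizes $np_1^{\ast},\ldots,np_k^{\ast}$. For each optimal class $i$, list its $np_i^{\ast}$ elements in the order they become covered by greedy: when the $r$-th one is processed, elements $r, r+1, \ldots, np_i^{\ast}$ of that class are all still uncovered and all lie in $S_i$, so the uncovered part of $S_i$ has size at least $np_i^{\ast}-r+1$, and greediness forces $g_{j(x)} \ge np_i^{\ast}-r+1$. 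Substituting,
\[
H(\text{greedy}) \;\le\; \frac{1}{n}\sum_{i=1}^{k}\sum_{r=1}^{np_i^{\ast}}\log\frac{n}{r} \;=\; \sum_{i=1}^{k} p_i^{\ast}\log n \;-\; \frac{1}{n}\sum_{i=1}^{k}\log(np_i^{\ast})!.
\]
The elementary bound $\log m! \ge m\log m - m\log e$ (i.e.\ $m! \ge (m/e)^m$) applied term by term produces $H(\text{greedy}) \le -\sum_i p_i^{\ast}\log p_i^{\ast} + \log e = H(\phi^{\ast}) + \log e$. The constant $\log e$ is exactly the Stirling slack, which is why it shows up in the conclusion.

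\textbf{Lower bound.} For matching inapproximability, I would start from Feige's $(1-\epsilon)\ln n$ hardness of minimum set cover and translate it into an additive entropy gap. The idea is to pad the ground set and blow up the collection so that a ``yes'' instance admits an assignment producing a partition dominated by a few very large parts (small entropy), while in a ``no'' instance every feasible $\phi$ is forced into a part-size profile close to the extremizer of the Stirling inequality used in the upper bound, pushing the minimum entropy up by essentially $\log e$. Since entropy depends on the whole part-size distribution rather than on the \emph{number} of sets used, the reduction cannot merely copy Feige's; the point is to engineer overlaps so that low-entropy assignments in the ``no'' case would imply small set covers in the underlying instance.

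\textbf{Main obstacle.} The approximation half is essentially mechanical once the charging and the Stirling bound are in place, so the real difficulty lies in calibrating the hardness reduction. What makes $\log e$ — and no smaller additive error — come out of the lower bound is that the same Stirling inequality must control both directions: one has to choose the padding size and the replication of each set so that the ``no''-case distribution cannot be squashed below the Stirling extremizer, and so that the $\epsilon$ coming from Feige's theorem becomes the $\epsilon$ in the additive gap $\log e - \epsilon$. Aligning these two uses of the Stirling tightness is where I expect the bulk of the work to lie.
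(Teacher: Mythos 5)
Your upper\textendash bound argument is correct, and it is a genuinely different proof from the one the survey presents. The survey here gives a dual fitting proof: it writes down the covering LP over $\mathcal{S}^*$, defines the dual variables $\tilde y_v = -\frac{1}{n}\log\frac{|S_i|e}{n}$ (where $S_i$ is the greedy batch containing $v$), and proves dual feasibility by bounding $\sum_{v\in S}\tilde y_v$ via $\prod_i |S_i|^{a_i} \ge |S|! \ge (|S|/e)^{|S|}$. Your charging argument reaches the same Stirling slack $\log e$ by ordering each optimal class $C_i$ by the time greedy covers its elements and using greediness to lower\textendash bound the batch size $g_{j(x)}$ by the residual size $np_i^* - r + 1$; summing the resulting $\log(n/r)$ terms and applying $m! \ge (m/e)^m$ finishes it. Both proofs hinge on exactly the same factorial inequality; yours is the elementary charging form (which is the form the original paper~\cite{CFJ08a} uses), whereas the dual fitting form makes the LP lower bound on $OPT$ explicit. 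They are interchangeable in strength, and your derivation is clean, including the subtlety that when the $r$-th covered element of $C_i$ is reached, elements $r,\dots,np_i^*$ of $C_i$ are still uncovered so $S_i$ alone certifies $g_{j(x)} \ge np_i^* - r + 1$.

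The lower bound, by contrast, is not actually proved in your proposal; you describe an intent rather than a reduction. Two concrete concerns. First, you say low\textendash entropy assignments in the ``no'' case should ``imply small set covers in the underlying instance,'' but entropy and cover cardinality are decoupled: an arbitrarily small entropy can be achieved with an arbitrarily large number of sets, so a direct translation from Feige's set\textendash cover hardness does not go through as you sketch it. Second, the reduction used in~\cite{CFJ08a} adapts the Feige\textendash Lov\'asz\textendash Tetali construction for minimum sum set cover (using equipartition/partition\textendash system gadgets), not raw padding of the ground set; the gap analysis there controls the whole part\textendash size profile, not just the number of parts. The survey itself does not reproduce this hardness proof either (it only cites~\cite{CFJ08a} and~\cite{FLT04}), so your incompleteness matches the paper's scope, but you should not present the lower bound as something that would ``come out'' of calibrating the Stirling inequality; the $\log e$ there emerges from the specific gadget combinatorics, not from re\textendash using the approximation\textendash side Stirling bound.
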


Our analysis of the greedy algorithm for the minimum entropy set cover problem is an improvement, both in terms of simplicity and approximation error, over the first analysis given by Halperin and Karp~\cite{HK05}. The hardness of approximation is shown by adapting a proof from Feige, Lov\'asz, and Tetali on the minimum sum set cover problem~\cite{FLT04}, which itself derives from the results of Feige on minimum set cover~\cite{F98}.

\paragraph{Analysis of the Greedy Algorithm by Dual Fitting.}

We present another proof that 
the greedy algorithm approximates the minimum entropy set cover 
problem within an additive error of $\log e$. 
It differs from the proof given in~\cite{CFJ08a} 
in that it uses the dual fitting method.

Let $OPT$ denote the entropy of an optimum solution.
Let $\mathcal{S}^{*}$ be the collection of subsets of sets in $\mathcal{S}$, that is,
$\mathcal{S}^{*} := \{S: \exists S' \in \mathcal{S}, S \subseteq S'\}$. 
The following linear program gives a lower bound on $OPT$:
\begin{equation}
\label{primal}
\begin{array}{rl@{\qquad}l}
\min          &\ds \sum_{S \in \mathcal{S}^{*}} 
\left(-\frac{|S|}{n}\log \frac{|S|}{n}\right) \cdot x_{S} \\[2ex]
\textrm{s.t.} &\ds \sum_{S \in \mathcal{S}^{*}, S \ni v} x_{S} = 1 & \forall v\in U \\[2ex]
&\ds x_{S} \geq 0 & \forall S \in \mathcal{S}^{*}
\end{array}
\end{equation}
Indeed, if we add the requirements that $x_{S} \in \{0, 1\}$ for every set $S$, 
we obtain a valid integer programming formulation of the minimum entropy set cover problem.

The dual of \eqref{primal} reads:
\begin{equation}
\label{dual}
\begin{array}{rl@{\qquad}l}
\max          &\ds \sum_{v \in U} y_{v} \\[2ex]
\textrm{s.t.} &\ds \sum_{v \in S} y_{v} \leq -\frac{|S|}{n}\log \frac{|S|}{n} 
& \forall S \in \mathcal{S}^{*}
\end{array}
\end{equation}
%
In the case of the minimum entropy set cover problem,
the greedy algorithm iteratively selects a set $S\in \mathcal{S}$
that covers a maximum number of uncovered elements in $U$, and assigns the latter
elements to the set $S$. Let $\ell$ be the number of iterations performed by this algorithm 
on input $(U, \mathcal{S})$. For
$i\in \{1, \dots, \ell\}$, let $S_{i}$ be the set of elements that are
covered during the $i$th iteration. (Thus $S_{i} \in \mathcal{S}^{*}$.)

The entropy of the greedy solution is
\begin{equation}
\label{eq:greedy}
\sum_{i=1}^{\ell} -\frac{|S_{i}|}{n}\log \frac{|S_{i}|}{n} =: g.
\end{equation}

Now, for every $v\in U$, let $\tilde y_{v}$ be defined as
$$
\tilde y_{v} := - \frac{1}{n} \log \frac{|S_{i}| \cdot e}{n},
$$
where $i$ is the (unique) index such that $v \in S_{i}$.

If the vector $\tilde y$ is feasible for \eqref{dual}, then we deduce that
\begin{align*}
OPT &\geq \sum_{v \in U} \tilde y_{v}  \\
& = \sum_{i=1}^{\ell} -\frac{|S_{i}|}{n}\log \frac{|S_{i}| \cdot e}{n} \\
& = g - \log e,
\end{align*}
implying that the greedy algorithm approximates $OPT$ within an additive constant of $\log e$.
Hence, it is enough to prove that $\tilde y$ is feasible for the dual, that is, $\sum_{v \in S} \tilde y_{v} \leq -\frac{|S|}{n}\log \frac{|S|}{n}$ for every $S \in \mathcal{S}^{*}$.

Let $S \in \mathcal{S}^{*}$ and, for every  
$i\in \{1, \dots, \ell\}$, let $a_{i}:= |S \cap S_{i}|$. 
At the beginning of the $i$th iteration of the greedy algorithm, all the elements in 
$S - \cup_{j=1}^{i-1}S_{j}$ are not yet covered. 
Since the algorithm could cover all these elements at that iteration, we have
$|S_{i}| \geq |S - \cup_{j=1}^{i-1}S_{j}| = |S| - \sum_{j=1}^{i-1}a_{j}$. This implies
\begin{equation}
\label{eq:S}
\prod_{i=1}^{\ell} |S_{i}|^{a_{i}} \geq 
\prod_{i=1}^{\ell}\left(|S| - \sum_{j=1}^{i-1}a_{j}\right)^{a_{i}}.
\end{equation}

Using that $\sum_{i=1}^{\ell}a_{i} = |S|$, the following inequality is easily seen to hold:
\begin{equation}
\label{eq:factorial}
\prod_{i=1}^{\ell}\left(|S| - \sum_{j=1}^{i-1}a_{j}\right)^{a_{i}} \geq |S|!
\end{equation}
Combining \eqref{eq:S} and \eqref{eq:factorial}, and using the lower bound $|S|! \geq (|S| / e)^{|S|}$, 
we obtain
\begin{align*}
\sum_{v \in S} \tilde y_{v} &= \sum_{i=1}^{\ell} -\frac{a_{i}}{n}\log \frac{|S_{i}|\cdot e}{n} \\
&= -\frac{1}{n} \sum_{i=1}^{\ell} a_{i}\log\left( |S_{i}|\cdot e\right) + \frac{|S|}{n}\log n \\
&= -\frac{1}{n} \log \prod_{i=1}^{\ell} \left( |S_{i}|\cdot e\right)^{a_{i}} + \frac{|S|}{n}\log n \\
&= -\frac{1}{n} \log \left(e^{|S|} \prod_{i=1}^{\ell} |S_{i}|^{a_{i}} \right) + \frac{|S|}{n}\log n \\
&\leq -\frac{1}{n} \log \left(e^{|S|} |S|! \right) + \frac{|S|}{n}\log n \\
&\leq -\frac{1}{n} \log |S|^{|S|}+ \frac{|S|}{n}\log n \\
&= -\frac{|S|}{n}\log \frac{|S|}{n}.
\end{align*}
Therefore, $\tilde y$ is feasible for \eqref{dual}, as desired.

\section{Minimum Entropy Orientations (Vertex Cover)}

The minimum entropy orientation problem is the following~\cite{CFJ08b}:
\begin{description}
\item[\sc instance:] An undirected graph $G=(V,E)$ 
\item[\sc solution:] An orientation of $G$ 
\item[\sc objective:] Minimize the entropy $- \sum_{v\in V} p_v \log p_v$, where $p_v := \rho(v) / |E|$, and $\rho(v)$ is the indegree of
vertex $v$ in the orientation 
\end{description}

An instance of the minimum entropy orientation problem together with a feasible solution are given in Figure~\ref{fig:exor}. Note that the problem is a special case of minimum entropy set cover in which every element in the ground set $U$ is contained in exactly two subsets. Thus it can be seen as a minimum entropy vertex cover problem. 

\paragraph{Results.}

We proved that the minimum entropy orientation problem is NP-hard~\cite{CFJ08b}. Let us denote by $\OPT(G)$ the minimum entropy of an orientation of $G$ (in bits). An orientation of $G$ is said to be biased if each edge $vw$ with $\deg(v) > \deg(w)$ is oriented towards $v$. Biased orientations have an entropy that is provably closer to the minimum than those obtained via the greedy algorithm.
\begin{theorem}[\cite{CFJ08b}]
\label{th:biased}
The entropy of any biased orientation of $G$ is a most $\OPT(G) + 1$ bits. It follows that the minimum entropy orientation problem can be approximated within an additive error of $1$ bit, in linear time.
\end{theorem}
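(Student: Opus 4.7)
The plan is to follow the dual-fitting strategy used in the proof of Theorem~\ref{thm:apxsc}, specialized to the orientation setting. The key observation is that the minimum entropy orientation problem is the special case of minimum entropy set cover in which every element (edge) belongs to exactly two sets (the stars $E_u$, $E_v$ of its endpoints), so one can hope to replace the additive factor of $\log e$ coming from the bound $k! \geq (k/e)^{k}$ by $\log 2 = 1$.

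Concretely, take $\mathcal{S}^{*} := \bigcup_{v \in V} 2^{E_{v}}$ in the LP~\eqref{primal}, where $E_v$ is the star of $v$. Given any biased orientation $B$ with indegrees $\rho_B(\cdot)$, I would construct a candidate dual vector
\[
\tilde y_{e} \;:=\; -\frac{1}{m}\log\frac{2\,\rho_{B}(h_{B}(e))}{m} \qquad (e \in E),
\]
where $m := |E|$ and $h_B(e)$ denotes the head of $e$ in $B$; this is the natural analog of the dual vector used for Theorem~\ref{thm:apxsc} with the constant $e$ replaced by $2$. Grouping edges by head and using $\sum_{v} \rho_{B}(v) = m$ yields
\[
\sum_{e \in E}\tilde y_{e} \;=\; -1 + \log m - \frac{1}{m}\sum_{v}\rho_{B}(v)\log\rho_{B}(v) \;=\; H(B) - 1,
\]
so dual feasibility of $\tilde y$ immediately gives $\OPT(G) \geq H(B) - 1$, which is the desired bound. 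The linear-time algorithmic part is then immediate, since a biased orientation can be produced in $O(|V|+|E|)$ time after sorting vertices by degree (e.g.\ by bucket sort).

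Dual feasibility reduces, after a short computation, to the inequality
\[
\rho_{B}(v)^{|S^{\mathrm{in}}|}\prod_{e \in S^{\mathrm{out}}}\rho_{B}(u_{e}) \;\geq\; \left(\frac{|S|}{2}\right)^{|S|}
\]
for every $S \in \mathcal{S}^{*}$ (say $S \subseteq E_{v}$), where $S = S^{\mathrm{in}} \cup S^{\mathrm{out}}$ is the partition of $S$ according to whether its edges are oriented toward or away from $v$ under $B$, and $u_{e}$ is the other endpoint of an out-edge $e$. The bound $\rho_{B}(v) \geq |S^{\mathrm{in}}|$ handles the first factor, and for $e \in S^{\mathrm{out}}$ the biased property yields $\deg(u_{e}) \geq \deg(v) \geq |S|$ together with $\rho_{B}(u_{e}) \geq 1$ (because $e$ itself is counted in $\rho_{B}(u_{e})$).

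The main obstacle is combining these pointwise estimates into the full product bound: plugging in $\rho_{B}(u_{e}) \geq 1$ naively is too weak when many out-edges in $S$ have heads with small indegree in $B$. I expect the remedy to be an amortization argument that pairs each $e \in S^{\mathrm{out}}$ with a distinct in-edge at $u_{e}$ (which is available because $\deg(u_{e}) \geq |S|$), so that the in-edges at the $u_{e}$'s collectively supply the missing factors in the product. Making this pairing precise — or, alternatively, tightening the definition of $\tilde y_{e}$ — is precisely where the vertex cover structure (each edge has exactly two endpoints) turns the factor $e$ of the general set cover analysis into the factor $2$ needed for a $+1$ additive guarantee, and this step I expect to be the crux of the proof.
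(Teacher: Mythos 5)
Your approach --- dual fitting on the set-cover LP, hoping to trade the factor $e$ for a factor $2$ by exploiting that each edge lies in exactly two stars --- is genuinely different in method from the cited result; the paper itself gives the dual-fitting argument only as an \emph{alternative} proof of the $\log e$ bound for set cover, and~\cite{CFJ08b} establishes the $+1$ bound for biased orientations by a direct combinatorial comparison of indegree sequences, not via the LP relaxation. Your algebraic reduction of dual feasibility to the displayed product inequality is correct, and you rightly identify it as the crux. Unfortunately the inequality is \emph{false}, so the gap you flag at the end cannot be closed by an amortization or pairing argument: the proposed dual vector simply is not feasible.

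Concretely, take $G = K_{d+1}$ with $d \geq 6$. All degrees equal $d$, so \emph{every} orientation is biased; take the transitive tournament on $v_0,\dots,v_d$ (orient $v_i \to v_j$ iff $i<j$), giving $\rho_B(v_j)=j$. With $v:=v_0$ and $S:=E_{v_0}$ we have $|S|=d$, $S^{\mathrm{in}}=\emptyset$, $S^{\mathrm{out}}=E_{v_0}$, and $\prod_{e\in S^{\mathrm{out}}}\rho_B(u_e)=d!$. Your inequality demands $d! \geq (d/2)^{d}$, which already fails at $d=6$ ($720<729$) and fails by an exponentially growing factor thereafter, since $d! \sim \sqrt{2\pi d}\,(d/e)^{d}$ and $e>2$. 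Thus the constraint of the dual LP~\eqref{dual} indexed by $S=E_{v_0}$ is violated. The structural reason is that the only leverage the biased property provides is $\deg(u_e)\geq\deg(v)$, which on a regular graph is vacuous and gives no control over $\rho_B(u_e)$ beyond $\rho_B(u_e)\geq 1$; no local pairing of in-edges at the $u_e$'s can rescue a product of indegrees that is genuinely too small. If you want to salvage the dual-fitting route you must redesign $\tilde y$ entirely, and you should also verify up front that the LP optimum is within $1$ bit of $\OPT(G)$ on all instances --- otherwise no choice of feasible $\tilde y$ can certify the theorem. The published argument sidesteps the LP and instead compares the biased orientation directly to an optimum.
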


\paragraph{Constant-time Approximation Algorithm for Bounded Degree Graphs.}

By making use of the fact that the computation of a biased orientation is purely local \cite{local}, we show that we can randomly sample such an approximate solution to guess $\OPT(G)$ within an additive error of $1 + \epsilon$ bits. The complexity of the resulting algorithm does not (directly) depend on $n$, but only on $\Delta$ and $1 / \epsilon$. This is a straightforward application of ideas presented by Parnas and Ron~\cite{PR07}. 

We consider a graph $G$ with $n$ vertices, $m$ edges, and maximum degree $\Delta$. Pick any preferred biased orientation $\overrightarrow{G}$ of $G$. (For instance, we may order the vertices of $G$ arbitrarily and orient an edge $vw$ with $\deg(v) = \deg(w)$ towards the vertex that appears last in the ordering.) The following algorithm returns an approximation of $\OPT(G)$ (below, $s$ is a parameter whose value will be decided later).

{\tt
\begin{enumerate}
\item For $i = 1$ to $s$
\begin{enumerate}
\item pick vertex $v_i$ uniformly at random
\item compute the indegree $\rho (v_i)$ of $v_i$ in $\overrightarrow{G}$
\end{enumerate}
\item return $H := \log m - \frac {n}{sm} \sum_{i=1}^s \rho (v_i) \log \rho (v_i)$
\end{enumerate}
}

The worst-case complexity of the algorithm is $O(s\Delta^2)$. 

\begin{theorem}
\label{thm:ctapx}
There is an algorithm of worst-case complexity $O(\Delta^4 \log^2\Delta / \epsilon^2)$ that, when given a graph $G$ with maximum degree $\Delta$ and at least as many edges as vertices, returns a number $H$ satisfying, with high probability,
$$
\OPT(G) \leq H \leq \OPT(G) + (1 + \epsilon).
$$
\end{theorem}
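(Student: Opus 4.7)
Let $h$ denote the entropy of the biased orientation $\overrightarrow{G}$ fixed by the algorithm. By Theorem~\ref{th:biased}, $\OPT(G) \leq h \leq \OPT(G) + 1$, so the task reduces to showing that the returned number $H$ approximates $h$ within $\epsilon$ with high probability. The plan is to (i) check that $H$ is an unbiased estimator of $h$, (ii) obtain concentration via Hoeffding's inequality, and (iii) account for the cost of computing each sampled indegree $\rho(v_i)$ locally in $\overrightarrow{G}$.

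For step (i), every edge contributes one unit to exactly one indegree, so $\sum_{v \in V} \rho(v) = m$ and hence
$$
h \;=\; -\sum_{v \in V} \frac{\rho(v)}{m} \log \frac{\rho(v)}{m} \;=\; \log m - \frac{1}{m} \sum_{v \in V} \rho(v) \log \rho(v).
$$
Since each $v_i$ is uniform on $V$, one has $\mathbb{E}[\rho(v_i)\log\rho(v_i)] = \tfrac{1}{n} \sum_{v \in V} \rho(v) \log \rho(v)$, and substituting in the definition of $H$ gives $\mathbb{E}[H] = h$. For step (ii), set $X_i := \rho(v_i)\log\rho(v_i)$; the $X_i$ are i.i.d.\ and lie in $[0, \Delta\log\Delta]$ because $\rho(v_i) \leq \deg(v_i) \leq \Delta$. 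Writing $|H - h| = \tfrac{n}{m} \bigl| \tfrac{1}{s} \sum_i X_i - \mathbb{E}[X_1] \bigr|$, Hoeffding's inequality yields
$$
P\bigl[\, |H - h| > \epsilon \,\bigr] \;\leq\; 2 \exp\!\left( - \frac{2 s \epsilon^2 m^2}{n^2 \Delta^2 \log^2 \Delta} \right).
$$
The hypothesis $m \geq n$ enters only here, bounding $n^2/m^2 \leq 1$; choosing $s = \Theta(\Delta^2 \log^2 \Delta / \epsilon^2)$ makes the failure probability an arbitrarily small constant, and combining with the sandwich on $h$ yields the stated bounds on $H$.

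For step (iii), the biased orientation is locally computable: deciding the orientation of any edge $vw$ requires only the endpoint degrees (and, when they are equal, their ranks in the fixed ordering), each obtainable in $O(\Delta)$ time from the adjacency lists. Hence $\rho(v_i)$ is computed by scanning the at most $\Delta$ edges incident to $v_i$ and resolving each orientation in $O(\Delta)$ time, for a per-sample cost of $O(\Delta^2)$ and a total running time of $O(s \Delta^2) = O(\Delta^4 \log^2 \Delta / \epsilon^2)$. The main obstacle in the argument is bridging the Hoeffding deviation (which controls the sample mean of the $X_i$) and the desired deviation of $H$: the normalization $n/(sm)$ contributes the factor $(n/m)^2$ in the exponent, and the hypothesis $m \geq n$ is exactly what absorbs this factor without inflating the sample size.
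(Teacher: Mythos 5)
Your proposal is correct and follows essentially the same route as the paper's proof: sandwich the (two-sided) entropy $h$ of the fixed biased orientation via Theorem~\ref{th:biased}, observe that $H$ is an unbiased estimator of $h$, apply Hoeffding with range $[0,\Delta\log\Delta]$, and use $m\geq n$ to absorb the $n/m$ normalization; the sample-size and running-time bookkeeping matches the paper as well. The only loose end is the very last step: the concentration plus the sandwich gives $\OPT(G)-\epsilon \leq H \leq \OPT(G)+1+\epsilon$ with high probability, which is two-sided, so to literally obtain $\OPT(G)\leq H$ as stated you should return $H+\epsilon$ (and rescale $\epsilon$ by a constant) --- the paper itself flags this same fix at the end of its proof.
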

\begin{proof}
Let $V := V(G)$ and $\OPT := \OPT(G)$. From the previous theorem we get:
$$
-\sum_{v\in V} \frac{\rho (v)}{m} \log \frac{\rho (v)}{m} = \log m - \frac 1m \sum_{v\in V} \rho (v) \log \rho (v) \leq \OPT + 1.
$$
For $i = 1, \ldots, s$, let $v_i$ denote a uniformly sampled vertex of $G$. By linearity of expectation, we have:
$$
E \left[\sum_{i=1}^s \rho(v_i) \log \rho(v_i)\right] = \frac sn \sum_{v\in V} \rho (v) \log \rho (v).
$$
Noting $0 \le \rho(v_i) \log \rho(v_i) \le \Delta \log \Delta$ (for all $i$), Hoeffding's inequality then implies:
\begin{eqnarray}
P\left[ \left| \sum_{i=1}^s \rho (v) \log \rho (v) - E\left[ \sum_{i=1}^s \rho (v) \log \rho (v)\right] \right| \geq \epsilon s \right] &  \leq  &
2\exp \left( - \frac{2s^2\epsilon^2}{s(\Delta \log \Delta)^2} \right) \nonumber \\
\Rightarrow
P\left[ \left| \frac n{sm} \sum_{i=1}^s \rho (v) \log \rho (v) - \frac 1m \sum_{v\in V} \rho (v) \log \rho (v) \right| \geq \epsilon \frac nm \right] &  \leq  &
2\exp \left( - \frac{2s\epsilon^2}{(\Delta \log \Delta)^2} \right)\nonumber \\
\Rightarrow
P\left[ \left| H - \OPT \right| \geq 1 + \epsilon \right] \leq
P\left[ \left| H - \OPT \right| \geq 1 + \epsilon \frac nm \right] & \leq & 2\exp \left( - \frac{2s\epsilon^2}{(\Delta \log \Delta)^2} \right). \nonumber
\end{eqnarray}
By letting $s=\Theta((\Delta \log \Delta)^2 / \epsilon^2)$, with arbitrarily high probability, we conclude that the above algorithm provides an approximation of $\OPT$ within $1+\epsilon$ bits in time $O(s\Delta^2) = O(\Delta^4\log^2\Delta / \epsilon^2)$. Note that this approximation can be either an under- or an over-approximation. To make the approximation one-sided, we can simply return $H + \epsilon$.
\end{proof}

\section{Minimum Entropy Coloring}

A proper coloring of a graph assigns colors to vertices such that adjacent vertices have distinct colors. We define the entropy of a proper coloring as the entropy of the color of a random vertex. An example is given in Figure~\ref{fig:excol}. The minimum entropy coloring problem is thus defined as follows:
\begin{description}
\item[\sc instance:] An undirected graph $G=(V,E)$ 
\item[\sc solution:] A proper coloring $\phi : V\to \mathbb{N}^{+}$ of $G$ 
\item[\sc objective]: Minimize the entropy $- \sum_i p_i \log p_i$, where $p_i := |\phi^{-1}(i)| / |V|$
\end{description}

Note that any instance of the minimum entropy coloring problem can be seen as an implicit instance of the minimum entropy set cover problem, in which the ground set is the set of vertices of the graph, and the subsets are all independent sets, described implicitly by the graph structure.\medskip

The problem studied by the authors in~\cite{CFJ07} was actually slightly more general: the graph $G$ came with nonnegative weights $w(v)$ on the vertices $v \in V$, summing up to $1$. The weighted version of the minimum entropy coloring problem is defined similarly as the unweighted version except now we let $p_i := \sum_{v\in\phi^{-1}(i)} w(v)$. (The unweighted version is obtained for the uniform weights $w(v) = 1/|V|$.)

\paragraph{Applications.}

\begin{figure}
\begin{center}
\subfigure[\label{fig:sideinformation}]{\includegraphics[scale=.5]{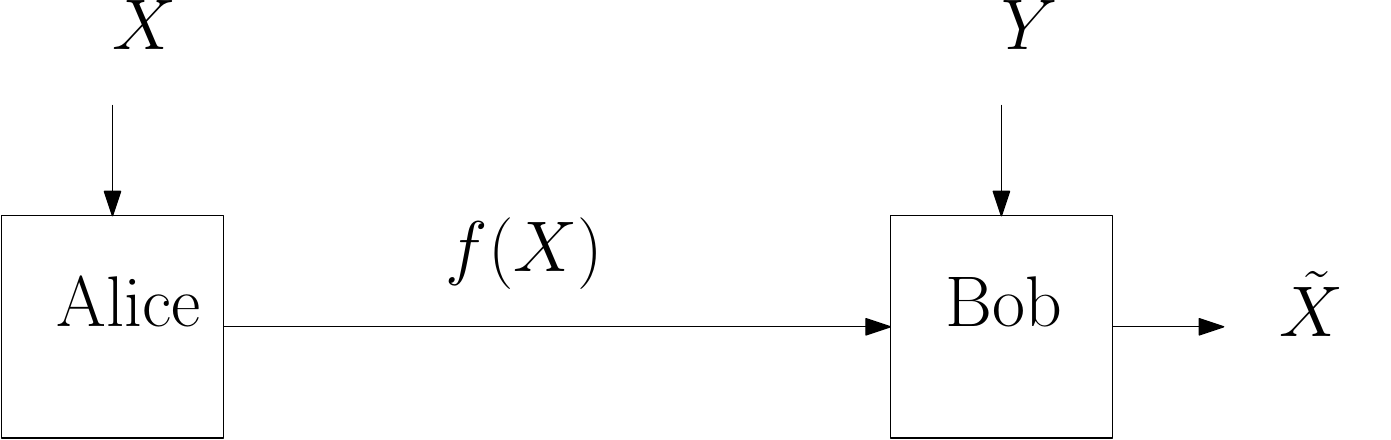}}
\ \\
\subfigure[\label{fig:jointprob}Joint probability distribution]{\mbox{}\qquad
\begin{tabular}{|c|c|c|}
\hline
       & 0 & 1 \\
       \hline
$a$ & $>0$ & $>0$ \\
\hline
$b$ & $>0$ & 0 \\
\hline
$c$ & 0 & $>0$ \\
\hline
\end{tabular}\qquad\mbox{}
}
\hspace{2cm}
\subfigure[\label{fig:confusability}Confusability graph.]{\mbox{}\qquad\includegraphics[scale=.4,angle=-90]{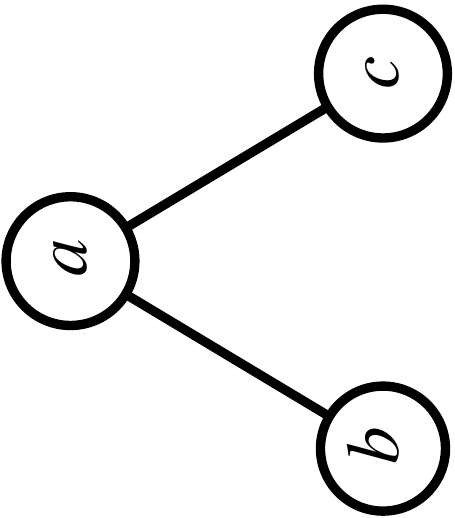}\qquad\mbox{}} 
\end{center}
\caption{Coding with side information.}
\end{figure}

Minimum entropy colorings have found applications in the field of data compression, and are related to several results in 
zero-error information theory~\cite{zeit}. The minimum entropy of a coloring is called the chromatic entropy by Alon and Orlitsky~\cite{AO96}. It was introduced in the context of coding with side information, a source coding scenario in which the receiver has access to an information that is correlated with the data being sent (see also \cite{KTRR03}). 

The scenario is pictured on Figure~\ref{fig:sideinformation}. Alice wishes to send a random variable $X$ to Bob, who has access to a side information $Y$. The value $Y$ is unknown to Alice, but allows Bob to gain some information on $X$ (thus $X$ and $Y$ are not independent). We need to find an efficient code that allows Alice and Bob to communicate. This code will consist of an assignment of binary codewords to the possible values of $X$.

To give a simple example, suppose that $X\in \{a,b,c\}$, and $Y\in \{0,1\}$. The joint probability distribution $P(X,Y)$ of $X$ and $Y$ is such that $P(b, 1)=0$, $P(c,0)=0$, and $P(x,y)>0$ for all other pairs $x,y$. We can notice that if Alice assigns the same codeword to the values $b$ and $c$, Bob will always be able to lift the ambiguity, and tell that $X=b$ if his side information $Y$ has value $0$, and $X=c$ if $Y=1$. More generally, from the joint probability distribution of $X$ and $Y$ (which is supposed to be known to both parties), we can infer a {\em confusability graph} with vertex set the domain of $X$ (here $\{a,b,c\}$) and an edge between two values $x,x'$ whenever there exists $y$ such that $P(x,y)>0$ and $P(x',y)>0$. Any coloring of the confusability graph will yield a suitable code for Alice and Bob. The rate of this code is exactly the entropy of the coloring, taking into account the marginal probability distribution $P(X)$. Note that the confusability graph does not depend on the exact values of the probability, but only on whether they are equal to 0 or not, which is typical of the zero-error setting~\cite{zeit}.

Minimum entropy colorings are instrumental in several other coding schemes introduced by Doshi {\em et al.}~\cite{DSMJ07,DSM07,DSMJ06} for functional data compression. It was also proposed for the encoding of segmented images~\cite{ABicip02}. Another application is described in Section~\ref{sec:ent}.

\paragraph{Results.}

Unsurprisingly, the minimum entropy coloring problem is hard to solve even on restricted instances, and hard to approximate in general. We proved the following two results~\cite{CFJ07}.

\begin{theorem}
\label{thm:hardinterval}
Finding a minimum entropy coloring of a weighted interval graph is strongly NP-hard.
\end{theorem}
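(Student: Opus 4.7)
The plan is to prove strong NP-hardness by a polynomial reduction from 3-\textsc{Partition}: given $3m$ positive integers $a_1,\ldots,a_{3m}$ in unary with $\sum_j a_j = mB$ and $B/4 < a_j < B/2$, decide whether $\{1,\ldots,3m\}$ can be partitioned into $m$ triples each summing to $B$. Since this problem is strongly NP-complete, any polynomial-size reduction from it yields strong NP-hardness of the target problem.

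From an instance of 3-\textsc{Partition}, I would construct a weighted interval graph $G$ built from two kinds of intervals. First, a small family of anchor intervals whose overlap pattern forces any proper coloring to use a specific number of colors, so that each color class is pinned down to contain a fixed anchor contribution. Second, $3m$ item intervals $I_1,\ldots,I_{3m}$, where $I_j$ has weight proportional to $a_j$, placed so that the set of colors admissible for $I_j$ is dictated by the anchors. The design is such that proper colorings of $G$ using the minimum number of colors correspond to partitions of the items into $m$ groups, and the weight distribution on color classes encodes the sums of these groups. A threshold $\tau$ is then chosen to match the entropy achieved exactly when every group has sum $B$, i.e., when the 3-\textsc{Partition} instance is feasible.

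The correctness proof has two parts. For the easy direction, a valid 3-partition yields an explicit coloring whose probability vector matches the distribution pinned down by $\tau$. For the hard direction, one uses strict concavity of the entropy: any unbalanced grouping of items, which is the only possibility when 3-\textsc{Partition} is infeasible, produces a probability vector at $\ell^1$-distance at least $\Omega(1/(mB))$ from the balanced one, and Pinsker's inequality (or a second-order Taylor bound on $-p\log p$) turns this into a polynomial entropy gap. Colorings using more than the minimum number of colors can be ruled out by weighting the anchors heavily enough that splitting any color class strictly increases the entropy, so that only colorings based on the ``right'' number of colors need to be considered.

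The main obstacle is the construction of the interval graph gadget: interval graphs are chordal, and encoding a tripartition-style constraint in their rather rigid linear geometry is delicate, so one needs to be careful that the admissible colorings of $G$ coincide with candidate $3$-partitions and nothing more. A secondary subtlety is the direction of the reduction: minimum entropy favours concentrated distributions, whereas 3-\textsc{Partition} favours balanced ones, so we must either show that the anchor structure prevents highly concentrated colorings (forcing the optimum to lie in the near-balanced regime where $\tau$ is informative) or work with the complementary decision ``is $\OPT(G) \geq \tau$?'', which is equivalent to the optimization problem for purposes of NP-hardness.
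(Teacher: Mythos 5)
The paper's proof of Theorem~\ref{thm:hardinterval} does not use 3-\textsc{Partition} at all; it reduces from the $k$-colorability problem on circular-arc graphs, cutting the circular representation open at a point $y$ to obtain an interval graph $G'$, and assigning weights to the boundary intervals $\ell_i, r_i$ so that a fixed ``staircase'' distribution $q^*$ dominates (in the sense of Lemma~\ref{lem-jungle}) every probability vector realizable by a coloring. The dominance lemma then implies that $q^*$ is the unique entropy minimizer if realizable, and one shows $q^*$ is realizable precisely when the circular-arc graph is $k$-colorable. You can see the skeleton of this argument in the paper's proof of Theorem~\ref{hardchord}, which it presents as a variant of the weighted interval reduction where the weights are simulated by the gadgets $J_k$ and hung cliques.

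The crucial feature of that construction is that the \emph{YES} case of the source problem corresponds to achieving the \emph{most concentrated} realizable distribution, which is exactly what a minimum-entropy objective rewards. Your proposal instead ties the YES case of 3-\textsc{Partition} to a \emph{balanced} grouping, which pushes in the wrong direction, and I do not think the fix you sketch closes the gap. Neither of your two remedies addresses the core problem: when the 3-\textsc{Partition} instance is infeasible, the minimum-entropy coloring of your gadget could well be \emph{more} concentrated (hence lower entropy) than the balanced target, not less, so no threshold on $\OPT(G)$ --- in either direction --- cleanly separates YES from NO. Making ``prevent all highly concentrated colorings'' precise would essentially require proving a dominance statement of the type the paper uses, but with the dominating extreme on the \emph{uniform} side, and the strict concavity of $-x\log x$ works against you there. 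To salvage a 3-\textsc{Partition} reduction you would need to invert the encoding so that a valid tripartition yields the \emph{most skewed} achievable color-class weight vector; as written, your construction does not do this, and that inversion is the missing idea, not a secondary subtlety.
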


The following hardness result is quite strong because the trivial coloring assigning a different color to each vertex has an entropy of at most $\log n$. Actually, the approximation status of the problem is very much comparable to that of the maximum independent set problem~\cite{H99}. (This is not a coincidence, see for instance the discussion below and in particular Corollary~\ref{cor:perfect_col}.)

\begin{theorem}
\label{thm:minentcol_inapx}
For any positive real $\epsilon$, it is NP-hard to approximate the minimum entropy coloring problem within an additive error of $(1 - \epsilon) \log n$.
\end{theorem}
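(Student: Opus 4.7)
The plan is to reduce from the inapproximability of the chromatic number. By the theorem of Feige and Kilian (later strengthened by Zuckerman), for every $\delta > 0$ it is NP-hard to distinguish, on input a graph $G$ with $n := |V(G)|$ vertices, the case $\chi(G) \le n^{\delta}$ from the case $\chi(G) \ge n^{1-\delta}$. Given $\epsilon > 0$, I would set $\delta := \epsilon/2$ and feed the same graph $G$ as the input to the minimum entropy coloring problem, aiming to show that the two cases yield optimal entropies differing by at least $(1-\epsilon) \log n$.

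The YES case is immediate: if $\chi(G) \le n^{\delta}$, then any proper coloring using $\chi(G)$ colors induces a distribution supported on at most $n^{\delta}$ outcomes, whose entropy is at most $\log n^{\delta} = \delta \log n$. The NO case relies on the elementary bound $\alpha(G) \cdot \chi(G) \ge n$: when $\chi(G) \ge n^{1-\delta}$, this forces $\alpha(G) \le n^{\delta}$, so every color class of \emph{every} proper coloring has size at most $n^{\delta}$, and each probability satisfies $p_i \le n^{\delta-1}$. Hence $-\log p_i \ge (1-\delta) \log n$ for every $i$, and summing with weights $p_i$ gives $-\sum_i p_i \log p_i \ge (1-\delta) \log n$ for any proper coloring of $G$.

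The resulting gap between the two cases is $(1-\delta)\log n - \delta \log n = (1-\epsilon)\log n$, so any polynomial-time algorithm approximating the minimum entropy coloring within additive error strictly less than $(1-\epsilon)\log n$ would decide the above chromatic gap problem, contradicting $\mathrm{P} \ne \mathrm{NP}$. I expect the only real subtlety to be citing a chromatic-number inapproximability result that simultaneously upper-bounds $\chi(G)$ in the YES case and lower-bounds it in the NO case on the \emph{same} graph; the entropy estimates themselves are routine, flowing from the inequality $H \le \log(\text{support size})$ on the YES side and from $p_{\max} \le \alpha(G)/n$ on the NO side.
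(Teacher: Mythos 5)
Your high-level strategy — reduce from a chromatic gap hardness result, bound the entropy by $\log(\text{support size})$ on one side and by $p_{\max}\le\alpha(G)/n$ on the other — is the right one, and the YES-case estimate is fine. But the NO-case deduction contains a genuine error. You write that $\alpha(G)\cdot\chi(G)\ge n$ together with $\chi(G)\ge n^{1-\delta}$ ``forces $\alpha(G)\le n^{\delta}$.'' That is backwards: the inequality $\alpha\chi\ge n$ yields $\alpha\ge n/\chi$, a \emph{lower} bound on $\alpha$, which becomes vacuous precisely when $\chi$ is large. There is no general implication from ``$\chi$ large'' to ``$\alpha$ small.'' For instance, the disjoint union of a clique on $n/2$ vertices and an independent set on $n/2$ vertices has $\chi=n/2$ and $\alpha=n/2+1$ simultaneously; moreover that graph admits a coloring of entropy roughly $\tfrac12\log n$, so your claimed lower bound $(1-\delta)\log n$ on the entropy would fail if such a graph were a NO instance. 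In short, a large chromatic number alone does not force every color class of every proper coloring to be small, which is what your NO-case argument needs.

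The fix is to use the hardness result in the form in which it is actually proved. The Feige--Kilian/Zuckerman constructions for chromatic number inapproximability establish the NO-side lower bound on $\chi$ \emph{by} upper-bounding the independence number: the reduction produces a graph $H$ on $N$ vertices such that in the YES case $\chi(H)\le N^{\gamma}$ while in the NO case $\alpha(H)\le N^{\gamma}$ (and the bound $\chi(H)\ge N^{1-\gamma}$ is then a corollary of $\chi\ge N/\alpha$). If you cite that two-sided statement — small $\chi$ on YES instances, small $\alpha$ on NO instances — your entropy calculation goes through exactly as you wrote it, and the argument is then essentially the same as the one used in the paper's reference \cite{CFJ07} (which is why the text remarks that the approximability of minimum entropy coloring mirrors that of maximum independent set). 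So the gap is not in the entropy estimates but in the logical step connecting the two sides of the hardness result; you must invoke the independence-number bound directly rather than try to derive it from the chromatic-number bound.
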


A positive result was given by Gijswijt, Jost, and Queyranne:
\begin{theorem}[\cite{GJQ07}]
The minimum entropy coloring problem can be solved in polynomial time on weighted co-interval graphs.
\end{theorem}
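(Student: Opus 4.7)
The plan is to analyze the natural greedy algorithm: at each step select a maximum weight independent set of $G$ as the next color class, remove the chosen vertices, and recurse. Since $G$ is the complement of an interval graph $\bar G$, an independent set of $G$ is a clique of $\bar G$, and by Helly's property it corresponds to a family of intervals sharing a common point. A maximum weight clique of a weighted interval graph can be found in polynomial time by sweeping over the interval endpoints and returning one at which the total weight of the intervals containing it is largest. Hence each greedy iteration is polynomial, and the algorithm terminates in at most $|V(G)|$ iterations.

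The central claim is that the vector $(p_1, p_2, \dots)$ of color-class weights produced by greedy, sorted in non-increasing order, \emph{majorizes} the corresponding vector of any other proper coloring. Since the Shannon entropy $-\sum_i p_i \log p_i$ is a symmetric concave, hence Schur-concave, function of its arguments, majorization immediately implies that the greedy coloring has minimum entropy, which gives the theorem.

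To establish majorization, I would prove the stronger statement that, for every $k$, the partial sum $p_1 + \cdots + p_k$ equals the maximum total weight attainable by $k$ pairwise disjoint independent sets of $G$. Given this, the top $k$ classes of any other proper coloring form $k$ disjoint independent sets whose total weight cannot exceed $p_1 + \cdots + p_k$, yielding the desired inequality of partial sums. The argument would exploit the ``consecutive cliques'' representation of $\bar G$: the maximal cliques of $\bar G$ admit a linear ordering in which, for each vertex $v$, the cliques containing $v$ form a contiguous block. A swap argument on this representation, or equivalently a polymatroidal interpretation of the $k$-clique-cover function on interval graphs, should show that the greedy rule cannot be improved.

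The main obstacle is precisely this greedy-optimality step: for general (perfect) graphs the analogous statement fails, so the proof must use something structural to co-interval graphs, namely the consecutive-clique arrangement that makes maximum weight $k$-clique covers behave like bases of a polymatroid. Once this step is in place, the overall running time is polynomial, because the algorithm performs at most $|V(G)|$ iterations, each reducing to a single maximum weight clique computation in a weighted interval graph.
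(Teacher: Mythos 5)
The paper itself offers no proof of this statement—it is a citation of Gijswijt, Jost, and Queyranne—so the only thing to assess is whether your proposed argument is sound. It is not: the central claim, that the greedy algorithm (iteratively take a maximum-weight independent set of $G$, i.e., a maximum-weight clique of the interval graph $\bar G$) produces a coloring whose sorted weight vector majorizes every other realizable vector, is simply false, and so is the weaker conclusion that greedy minimizes entropy. Consider the interval graph on $\bar G$ with intervals $a=(0,2)$, $b=(1,3)$, $c=(2,4)$, $d=(3,5)$ and weights $w(a)=w(d)=1$, $w(b)=w(c)=1.1$. The unique maximum-weight clique is $\{b,c\}$ (weight $2.2$). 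After greedy removes it, $a$ and $d$ are non-intersecting, hence isolated in the remaining interval graph, giving the partition $(2.2,\,1,\,1)$ with normalized entropy $\approx 1.47$ bits. But the clique partition $\{a,b\},\{c,d\}$ gives $(2.1,\,2.1)$ with entropy exactly $1$ bit. In particular your claim ``$p_1+\cdots+p_k$ equals the maximum total weight of $k$ disjoint independent sets'' already fails at $k=2$: greedy reaches $3.2$ while $4.2$ is attainable. The deeper issue is that in this instance no realizable distribution dominates all others in the majorization order—greedy wins at $k=1$, the alternative wins at $k=2$—so the Schur-concavity route cannot even be set up. This is exactly the classical greedy failure mode (a large first class ``cuts'' the line in a bad place), and it is why Gijswijt, Jost and Queyranne need a genuinely different algorithm: their result is for minimizing a general value-monotone submodular cost over clique partitions of an interval graph, and it is obtained by a structural/LP-type argument exploiting the consecutive-clique arrangement, not by iterated maximum-weight cliques. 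Your reduction of each greedy step to a sweep over interval endpoints is fine as far as it goes, but the optimality argument that is supposed to carry the theorem is missing and cannot be repaired along these lines.
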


\paragraph{Hardness for Unweighted Chordal Graphs.}

The reduction given in~\cite{CFJ07} to prove Theorem~\ref{thm:hardinterval} uses in a crucial
way the weights on the vertices. It is therefore natural to ask whether the problem is also
NP-hard on {\em unweighted} interval graphs. While we do not know the answer to this question,
we show here that, in its unweighted version, the minimum entropy coloring problem 
is NP-hard on chordal graphs (which contain interval graphs). Our proof is a variant
of the previous reduction; the main ingredient is a gadget that allows us to (roughly) 
simulate the weights.

\begin{theorem}
\label{hardchord}
The minimum entropy coloring problem is NP-hard on unweighted chordal graphs.
\end{theorem}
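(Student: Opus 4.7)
The plan is to reduce from the NP-hard problem of minimum entropy colouring on weighted interval graphs (Theorem~\ref{thm:hardinterval}). Since interval graphs are chordal, the combinatorial skeleton of the reduction of~\cite{CFJ07} can be reused directly; what is needed is a gadget that encodes integer vertex weights while preserving chordality. Starting from an instance $(G,w)$ produced by~\cite{CFJ07}, and assuming after scaling that the weights $w_v$ are positive integers polynomially bounded in $|V(G)|$, I would attach to each vertex $v$ a \emph{weight block} $\Gamma_v$ of $w_v-1$ auxiliary vertices. Each auxiliary vertex $m \in \Gamma_v$ would be made simplicial, with neighbourhood a carefully chosen clique $K_m \subseteq N(v)$; since the addition of a simplicial vertex preserves chordality, the enlarged graph $G'$ is chordal. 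The naive alternative of replacing $v$ by an independent set of $w_v$ false twins with $N(v)$ is not available here, because it immediately creates an induced $4$-cycle as soon as $v$ has two non-adjacent neighbours.

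A key observation justifying the gadget is that, because $K_m \subseteq N(v)$ is a clique and $v \notin K_m$, in any proper colouring of $G$ the colour of $v$ is forbidden on every vertex of $K_m$ and is therefore available at $m$: one can always colour every vertex of $\Gamma_v$ with the colour of $v$. This gives one direction of the correspondence, in which a weighted colouring $\phi$ of $(G,w)$ lifts to a colouring of $G'$ obtained by placing each $m \in \Gamma_v$ into the colour class of $\phi(v)$; the resulting unweighted colour-class sizes, divided by $|V(G')|=\sum_v w_v$, are exactly the normalized weights of $\phi$, so the two entropies coincide.

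The converse direction is where the main obstacle lies. Given an optimum unweighted colouring of $G'$, one would like to show, via the concavity-of-$-p\log p$ merging argument already used in the analysis of minimum entropy set cover, that every auxiliary vertex of $\Gamma_v$ can be pushed into the colour class of $v$ without increasing the entropy. The difficulty is that the merging argument only ever pushes mass toward the \emph{largest} feasible colour class, which need not contain $v$. The cliques $K_m$ must therefore be chosen so that, among the colours still compatible with $m$, the class containing $v$ is the entropy-attractive one; a natural choice is a maximum clique in $N(v)$, which forbids as many other colours at $m$ as possible.

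Verifying this quantitatively, and performing the case analysis needed to rule out long induced cycles created by the simultaneous attachment of shadows to several original vertices, is the technical crux. The resulting correspondence should match the weighted problem within an $o(1)$ additive slack, which is enough to transfer the NP-hardness provided the decision version of the reduction in~\cite{CFJ07} is, as is standard, robust against a sub-constant perturbation after a polynomial rescaling of the weights.
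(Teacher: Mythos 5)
Your proposal correctly identifies the core difficulty---encoding vertex weights within an unweighted chordal graph---and even flags the exact step where it is likely to break down: given an optimum unweighted colouring of $G'$, you cannot force the auxiliary vertices of $\Gamma_v$ back into the colour class of~$v$. Unfortunately this gap is not a technicality to be ``verified quantitatively''; the gadget is already wrong on the smallest nontrivial instance. Take $G = K_2$ with vertices $a,b$ and weights $w_a = w_b = 5$; the minimum weighted entropy is~$1$ bit. Your construction yields $\Gamma_a = \{m_1,\dots,m_4\}$ with $K_{m_i}=\{b\}$ and $\Gamma_b = \{p_1,\dots,p_4\}$ with $K_{p_i}=\{a\}$. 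No $m_i$ is adjacent to any $p_j$, so $\{m_1,\dots,m_4,p_1,\dots,p_4\}$ is an independent set of size $8$ in $G'$, and the proper colouring with classes $\{m_1,\dots,m_4,p_1,\dots,p_4\}$, $\{a\}$, $\{b\}$ has distribution $(8/10,1/10,1/10)$ and entropy $\approx 0.92 < 1$. Hence the optimum unweighted entropy of $G'$ is \emph{strictly below} the optimum weighted entropy of $G$, by a constant (not $o(1)$) gap, and the reduction does not preserve the value. The trouble is structural: a simplicial vertex attached to a clique $K_m \subsetneq N(v)$ is strictly \emph{less} constrained than $v$, so the entropy-minimizer can merge the weight blocks of several original vertices into one oversized class that no weighted colouring of $G$ can mimic; choosing a maximum clique in $N(v)$ does not help, since $N(v)$ is typically not itself a clique in an interval graph.

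The paper avoids weight-simulation entirely. It returns to the source reduction (deciding $k$-colourability of a circular arc graph $G$), cuts $G$ open at a point covered by exactly $k$ arcs to get an interval graph, and then, instead of weighting the split intervals $\ell_i, r_i$, \emph{subdivides} each $r_i$ (resp.\ $\ell_i$) into $i{+}1$ subintervals so that the new pieces induce a fixed ``staircase'' interval graph $J_k$; it further hangs a $k$-clique on every remaining vertex, producing a chordal graph $J'$. The key lemma is not the concavity/merging argument you invoke but a \emph{dominance} (majorization) statement: an explicit distribution $q^*$ dominates every distribution realizable by a colouring of $J'$, and $q^*$ is realizable if and only if $G$ is $k$-colourable. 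Combined with Lemma~\ref{lem-jungle}, this turns a minimum entropy colouring of $J'$ into a decider for $k$-colourability of $G$, with no perturbation or robustness argument needed. The structured gadget $J_k$ is precisely what closes the loophole your counterexample exploits: its unique entropy-optimal colouring is the rowwise one, so the filler vertices have no ``wrong'' large class to defect to.
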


Before proving Theorem~\ref{hardchord}, 
we introduce a few definitions and lemmas. 
Suppose $q=(q_i)$ and $r=(r_i)$ are two probability distributions
over $\mathbb{N}^+$ with finite support. If $\sum_{i=1}^\ell q_i \le \sum_{i=1}^\ell r_i$
holds for all $\ell$, we say that $q$ is {\em dominated} by $r$.
The following lemma is a standard consequence of the strict
concavity of the function $x \mapsto -x\log x$
(see~\cite{FHN-soda, HLP88} for different proofs).

\begin{lemma}
\label{lem-jungle}
Let $q=(q_i)$ and $r=(r_i)$ be two probability distributions 
over $\mathbb{N}^+$ with finite support. Assume that $q$ is 
nonincreasing, that is, $q_i \ge q_{i+1}$ for $i\ge 1$. If 
$q$ is dominated by $r$, then the entropy of $q$ is at least that of $r$, with
equality if and only if $q=r$.
\end{lemma}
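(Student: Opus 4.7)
\noindent\emph{Proof plan.}
The plan is to reduce the statement to the classical entropy form of the Hardy--Littlewood--P\'olya majorization inequality. Since permuting the indices of a distribution does not change its entropy, I will replace $r$ by its nonincreasing rearrangement $r^* = (r^*_i)$, so that $H(r) = H(r^*)$; it then suffices to prove $H(q) \ge H(r^*)$, with equality forcing $q = r^* = r$.

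First I would verify that $r^*$ majorizes $q$ in the classical sense. Since sorting in nonincreasing order maximizes every prefix sum, one has $\sum_{i=1}^\ell r_i \le \sum_{i=1}^\ell r^*_i$ for all $\ell$; combined with the domination hypothesis $\sum_{i=1}^\ell q_i \le \sum_{i=1}^\ell r_i$, this yields $\sum_{i=1}^\ell q_i \le \sum_{i=1}^\ell r^*_i$ for every $\ell$. As both distributions sum to $1$ and $q$ is already nonincreasing by hypothesis, this is precisely the classical majorization $q \preceq r^*$.

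Next I would invoke the strict Schur-concavity of the entropy function. The cleanest self-contained route is via the Hardy--Littlewood--P\'olya characterization of majorization by T-transforms: $q$ can be obtained from $r^*$ through a finite sequence of ``Robin Hood'' transfers, each moving a nonnegative amount of mass from some coordinate $i$ to a later coordinate $j$ while preserving the nonincreasing ordering. Strict concavity of $x \mapsto -x\log x$ implies that each such transfer strictly increases the entropy unless the amount of mass transferred is zero, so telescoping over the sequence gives $H(q) \ge H(r^*) = H(r)$.

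For the equality case, $H(q) = H(r)$ forces every transfer in the sequence to be trivial, and therefore $q = r^*$. Plugging this back into the inequality chain $\sum_{i=1}^\ell q_i \le \sum_{i=1}^\ell r_i \le \sum_{i=1}^\ell r^*_i$ collapses it into equalities for every $\ell$, and differencing these partial sums yields $r = r^*$, hence $r = q$. The main subtlety I anticipate is exactly this last step: one must take care not to conclude merely that $q$ agrees with the sorted version of $r$, but rather that $r$ was already sorted to begin with; the domination hypothesis, once $q = r^*$ is known, is precisely what forces this.
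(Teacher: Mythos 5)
Your proof is correct. The paper itself does not spell out a proof of this lemma; it simply attributes it to strict concavity of $x\mapsto -x\log x$ and cites Hardy--Littlewood--P\'olya among others, and your route --- pass to the nonincreasing rearrangement $r^*$, observe that domination plus the sorting inequality gives the classical majorization $q\prec r^*$, then invoke strict Schur-concavity of entropy --- is precisely the standard argument those references carry out. The one small informality is the claim that the T-transforms can be chosen to preserve the nonincreasing order at every intermediate step; this is not needed (and not what HLP asserts), since strict Schur-concavity already gives $H(q)\ge H(r^*)$ with equality iff $q$ is a permutation of $r^*$, which forces $q=r^*$ because both are sorted. You correctly flagged and handled the genuine subtlety, namely that equality must be propagated back through $\sum_{i\le\ell} q_i \le \sum_{i\le\ell} r_i \le \sum_{i\le\ell} r^*_i$ to conclude $r=r^*$ and hence $r=q$.
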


A coloring $\phi$ of a graph $G$ {\em realizes} a probability distribution
$q$ if $q$ is the probability distribution induced by $\phi$, that is, 
if $q_{i} = |\phi^{-1}(i)|/n$ for all $i$ (where $n:=|G|$).

Let $J_k$ be the intersection 
graph of the set $\{h_i^j\ |\ 1 \le j \le i \le k\}$ of intervals, 
where $h_i^j$ denotes the open interval $\big((j-1)/i, j/i\big)$ 
(see Figure~\ref{fig-J5} for a representation of $J_5$).
We call the independent set $\{h_i^j\ |\ 1 \le j \le i\}$ 
the {\em $i$-th row} of $J_k$. 

\begin{figure}
\centering
\includegraphics[width=0.5\textwidth]{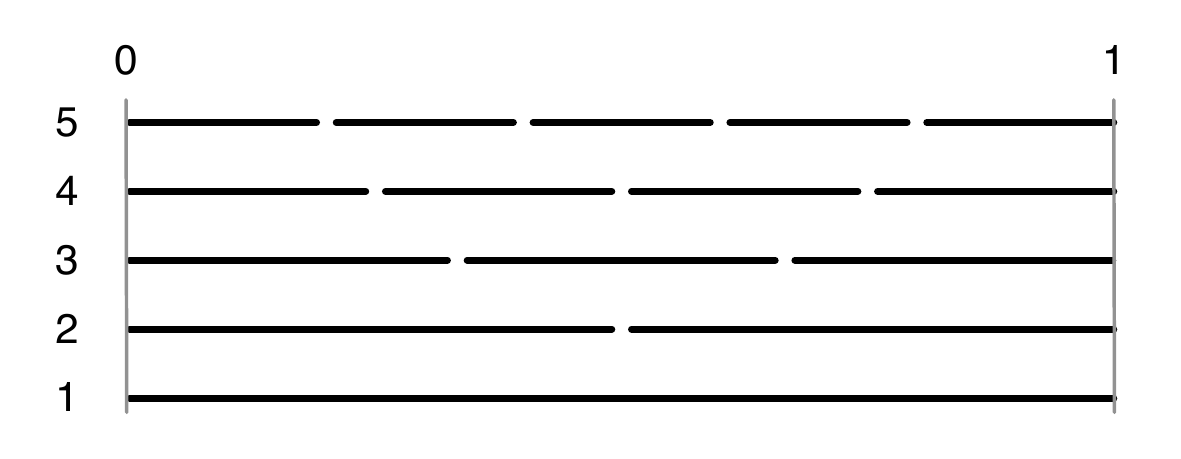} 
\caption{\label{fig-J5}An interval representation of $J_{5}$.}
\end{figure}

\begin{lemma}
\label{lem-Jk}
The probability distribution $\big(k/|J_{k}|, (k-1)/|J_{k}|, \dots, 1/|J_{k}|, 0, \dots\big)$
dominates every probability distribution realizable by a coloring 
of $J_{k}$. Moreover, every coloring 
realizing this distribution colors each row with a single color.
\end{lemma}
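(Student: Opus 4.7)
The plan is to use the Lebesgue length of the intervals $h_i^j$ as a linear-programming budget. Each $h_i^j$ has length $1/i$, and the intervals of the $i$-th row partition $(0,1)$ into $i$ equal subintervals; hence any independent set $S$ of $J_k$ is a collection of pairwise disjoint open subintervals of $(0,1)$. Writing $S_i$ for $S$ intersected with row $i$, this immediately gives
\[
\sum_{i=1}^k \frac{|S_i|}{i} \;\le\; 1, \qquad \text{and} \qquad |S_i| \le i \text{ for every } i.
\]

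The central step is to lift this inequality to unions of color classes. Fix a proper coloring of $J_k$, sort its color classes by decreasing size, and let $C_1, \dots, C_\ell$ be the first $\ell$. Set $T := C_1 \cup \dots \cup C_\ell$ and $t_i := |T \cap \text{row } i|$. Summing the independent-set bound over $C_1, \dots, C_\ell$ yields $\sum_i t_i / i \le \ell$ while $t_i \le i$ still holds, so
\[
|T| \;=\; \sum_{i=1}^k t_i \;\le\; \max\Bigl\{\, \textstyle\sum_i t_i \;:\; \sum_i t_i/i \le \ell,\ 0 \le t_i \le i \,\Bigr\}.
\]
Since a unit of the budget $\sum t_i/i$ spent on row $i$ buys $i$ vertices, a standard greedy/exchange argument shows that the maximum is attained by filling rows $k, k-1, \dots, k-\ell+1$ completely, giving the bound $|T| \le k + (k-1) + \dots + (k-\ell+1)$ when $\ell \le k$ (and $|T| \le |J_k|$ trivially when $\ell > k$). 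Dividing by $|J_k|$ recovers exactly the partial sums of the target distribution, establishing the first assertion.

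For the equality statement, I would induct on $k$. If a coloring realizes the target distribution, its largest color class $C_1$ has size $k$, so it is an independent set of $J_k$ of size $k$. Writing $a := |C_1 \cap \text{row } k|$ and using that the remaining rows contribute at least $(k-a)/(k-1)$ to $\sum_i |(C_1)_i|/i$, the length inequality forces
\[
\frac{a}{k} + \frac{k-a}{k-1} \;\le\; 1,
\]
which holds only when $a = k$; hence $C_1$ is exactly row $k$. Removing row $k$ leaves the graph $J_{k-1}$ together with a coloring whose class sizes are $k-1, k-2, \dots, 1$, which is precisely the target distribution for $J_{k-1}$, and the inductive hypothesis finishes the proof.

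The main technical obstacle is the LP step: one has to justify that the two constraints $\sum t_i/i \le \ell$ and $t_i \le i$ alone are tight enough to yield the combinatorial bound, and that equality at $\ell = 1$ really pins down the largest class to be a whole row rather than some clever mixture across rows. Both facts come out cleanly from the same convex-combination inequality, which is the only nontrivial calculation needed.
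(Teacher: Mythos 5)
Your proof is correct and uses the same central idea as the paper: treat the interval lengths as a budget, observe that each color class is a union of disjoint open subintervals of $(0,1)$ and so has total length at most $1$, and then bound the partial sums $|C_1|+\dots+|C_\ell|$ by the sum of the $\ell$ largest row sizes. The paper states this combinatorial bound (and its equality characterization) in one line ``by the definition of $J_k$'' without elaboration; what you add is the explicit fractional-knapsack justification for why budget $\ell$ spent across rows with per-unit costs $1/i$ cannot buy more than $k+(k-1)+\dots+(k-\ell+1)$ vertices, which is a welcome filling-in of that gap. For the second assertion (each row monochromatic), you deviate slightly: the paper extracts it directly from the stated equality condition, whereas you peel off the largest class via the elementary estimate $a/k+(k-a)/(k-1)\leq 1\Rightarrow a=k$ and induct on $k$ after deleting row $k$ to land in $J_{k-1}$. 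Both routes are fine; your induction is arguably cleaner to verify because it only ever analyzes a single independent set of size $k$ at a time, rather than having to argue simultaneously about all $i$ parts of a subpartition. One small caveat worth stating explicitly in a write-up: the dominance assertion tacitly concerns the \emph{nonincreasing rearrangement} of the realized distribution; your choice to sort the color classes by decreasing size before forming $T$ handles this, but it is worth saying so, since passing from the sorted prefix sums to arbitrary prefix sums requires one extra (trivial) inequality.
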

\begin{proof}
For $v\in V(J_{k})$, denote by $|v|$ the length of the interval corresponding
to $v$. Fix $i\in\{1,\dots,k\}$ and consider any subpartition $P$ of 
$V(J_{k})$ into $i$ nonempty parts $P_{1}, \dots, P_{i}$ satisfying
\begin{equation}
\label{eq-lem-Jk-1}
\sum_{v\in P_{j}} |v| \leq 1
\end{equation}
for every $1\leq j\leq i$. By the definition of $J_{k}$ it follows 
\begin{equation}
\label{eq-lem-Jk-2}
|P_{1}| + \cdots + |P_{i}| \leq k + (k-1) + \cdots + (k-i+1),
\end{equation}
with equality if and only if the parts of $P$ correspond to
the rows numbered from $k-i+1$ to $k$.

Now, consider any coloring $\phi$ of $J_{k}$ and denote
by $P_{1}, \dots, P_{\ell}$ its color classes.
Since each color class of $\phi$ is an independent set, $P_{j}$ must 
satisfy~\eqref{eq-lem-Jk-1} for every $j$. The lemma follows then
by using~\eqref{eq-lem-Jk-2} for every $i\in \{1,\dots, \ell\}$.
\end{proof}

\begin{proof}[Proof of Theorem~\ref{hardchord}] 
We reduce from the NP-complete problem of deciding if a circular arc
graph $G$ is $k$-colorable~\cite{GJMP80}.
Given a circular arc graph $G$, there exists a polynomial time algorithm to construct a circular representation
of it~\cite{T80}. A key idea of our proof is to start with
a circular arc graph and cut it open somewhere to obtain an interval
graph (see~\cite{M05-orl} for another application of this technique to minimum sum coloring).

Let $y$ be an arbitrary point on the circle that is not the endpoint
of any arc in the representation of $G$. Denote by $k'$
the number of arcs in which $y$ is included. If $k'>k$, then $G$ is not
$k$-colorable. On the other hand, if $k'<k$, we add to the representation $k-k'$ sufficiently
small arcs that only intersect arcs including $y$. This cannot
increase the chromatic number of $G$ beyond $k$. Thus we assume 
that $y$ is contained in exactly $k$ arcs.

We denote by $a_1,\ldots, a_k$ the arcs containing $y$. Splitting each
arc $a_i$ into two parts $\ell_i$ and $r_i$ at point $y$ yields an interval
representation of an interval graph $G'$. The original graph $G$ is
$k$-colorable if and only if there exists a $k$-coloring of $G'$ in which
$\ell_j$ and $r_j$ receive the same color for $1 \leq j \leq k$.

Up to this point, the reduction is the same as for weighted interval graphs~\cite{CFJ07}.
The latter proceeds by adding weights on the vertices of the interval graph $G'$.
Here, we will instead subdivide each interval $\ell_i$ and $r_i$.
We first describe the transformation for the intervals $r_i$. 
Consider an interval representation of $G'$ where each interval $r_i$ is of the form 
$r_i = (y_i,-1) \cup [-1,1)$ for some real $y_i < -1$. We split each 
$r_i$ into $i+1$ intervals $r^0_i, r_i^1, 
\ldots ,r_i^i$ with 
$$
r_i^j := 
\begin{cases}
\big(y_i,-i/k \big) &\text{ if } j=0;\\
\big(-i/k, 1/i \big)  & \text{ if } j=1,\\
\big((j-1)/i, j/i \big) & \text{ otherwise}.\\   
\end{cases}
$$
(See Figure~\ref{fig-J5-extended} for an illustration.)
Notice that the subgraph induced by $R := \{r_i^j\ |\ 
1 \le j \le i \le k\}$ is isomorphic to $J_k$. 
A symmetric modification is made on the intervals $\ell_i$, 
the new intervals are denoted $\ell^0_i$, $\ell_i^1$ ,\ldots, $\ell_i^i$. 
We also let $L := \{\ell_i^j\ |\ 1 \le j \le i \le k\}$, and denote by $J$ the graph 
obtained after the transformation on the intervals $r_{i}$ and $\ell_{i}$ is completed.

\begin{figure}
\centering
\includegraphics[width=0.9\textwidth]{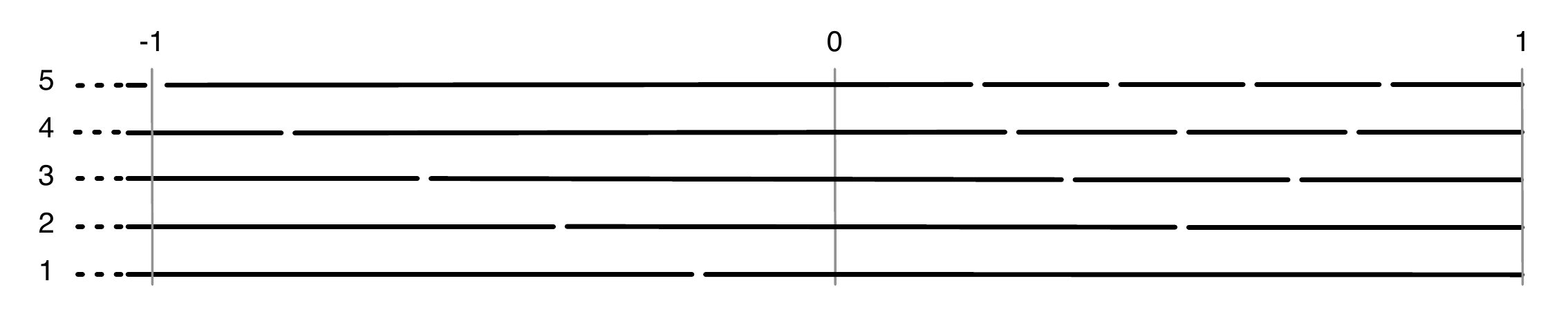} 
\caption{\label{fig-J5-extended}Splitting of the $r_{i}$'s for $k=5$.}
\end{figure}

Now, we hang on each vertex $v\in V(J) - (L\cup R)$  
a clique of cardinality $k$. Thus $v$ is one of the vertices of the clique 
and the $k-1$ other vertices are new vertices that are added to $J$.
The resulting graph $J'$ is not necessarily an interval graph, but it 
is still chordal.

Let $q^*=(q^*_{i})$ be the probability distribution over $\mathbb{N}^{+}$ defined as
$$
q^*_{i}:= \left\{
\begin{array}{lll}
\big(n + 2(k-i+1)\big) / |J'| & & \textrm{if } i\in \{1,\dots,k\}; \\
0 & & \textrm{otherwise},
\end{array}
\right.
$$
where $n := |G'|$.

\begin{claim}
\label{claim-hardness-chordal-dominates}
$q^*$ dominates every probability distribution realizable by a coloring of $J'$.
\end{claim}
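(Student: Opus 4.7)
The plan is to work directly with the domination definition: fix a proper coloring $\phi$ of $J'$, and show that for every $\ell \geq 1$, the sum of the sizes of the $\ell$ largest color classes of $\phi$ is at most $\ell(n + 2k - \ell + 1)$, which one checks equals $|J'| \cdot \sum_{i=1}^{\ell} q^*_i$ (using $|J'| = k(n+k+1)$, obtained by counting the $n$ vertices of $V(J) - (L\cup R)$, the $k(k+1)$ vertices of $L \cup R$, and the $n(k-1)$ new clique vertices). Since $q^*$ is supported on $\{1,\dots,k\}$ and sums to $1$, the inequality is automatic for $\ell \geq k$, so the task reduces to the range $1 \leq \ell \leq k$.

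Fix such an $\ell$ and let $C$ be the set of colors achieving the $\ell$ largest classes. I would split
$$
\sum_{c \in C} |\phi^{-1}(c)| \;=\; (\text{contribution from } R) \;+\; (\text{contribution from } L) \;+\; (\text{contribution from the hung cliques}),
$$
where ``hung cliques'' denotes, for each $v \in V(J) - (L \cup R)$, the $k$-clique $K_v$ consisting of $v$ together with its $k-1$ new private neighbors. Each $K_v$ is a clique of size $k$, so in $\phi$ it uses exactly $k$ distinct colors and contributes at most one vertex per color; hence its contribution to $\bigcup_{c\in C}\phi^{-1}(c)$ is at most $\min(\ell,k)=\ell$, giving an aggregate hung-clique contribution of at most $n\ell$.

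For the $R$-contribution, I would use that the induced subgraph $J'[R]$ is isomorphic to $J_k$, and that the sets $R \cap \phi^{-1}(c)$ for $c \in C$ form at most $\ell$ pairwise disjoint independent sets of $J_k$. This is exactly the setup of the inequality~\eqref{eq-lem-Jk-2} inside the proof of Lemma~\ref{lem-Jk} (taking $i = \ell$), which gives
$$
\sum_{c \in C} |R \cap \phi^{-1}(c)| \;\leq\; k + (k-1) + \cdots + (k-\ell+1) \;=\; \ell k - \tfrac{\ell(\ell-1)}{2}.
$$
The symmetric bound holds for $L$. Adding the three contributions yields $n\ell + 2\ell k - \ell(\ell-1) = \ell(n + 2k - \ell + 1)$, which matches $|J'| \cdot \sum_{i=1}^\ell q^*_i$ as required, so $q^*$ dominates the distribution realized by $\phi$.

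The only delicate point I see is resisting the temptation to invoke Lemma~\ref{lem-Jk} as a black box: the lemma is about the distribution realized by an entire coloring of $J_k$, whereas what is actually needed is the stronger intermediate fact inside its proof, namely the subpartition inequality applied to just the $\ell$ color classes that $C$ contributes to $R$ (and symmetrically to $L$). Once that is recognized, the remaining work is purely bookkeeping: verifying $|J'| = k(n+k+1)$ and unrolling $\sum_{i=1}^\ell q^*_i \cdot |J'| = \ell(n + 2k - \ell + 1)$.
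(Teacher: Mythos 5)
Your proof is correct and uses the same $R$/$L$/$U$ decomposition with the same key estimates as the paper, merely unrolled into explicit prefix-sum inequalities rather than phrased as componentwise domination of the sequences $q^R$, $q^L$, $q^U$ (which is how the paper states it: $q^* = 2q^{*,J_k} + q^{*,U}$ dominates $q = q^R + q^L + q^U$ term by term). Your worry about invoking Lemma~\ref{lem-Jk} as a black box is unfounded: the restriction $\phi|_R$ is itself a coloring of $J_k$ in the paper's sense, and after relabeling the colors in $C$ to $\{1,\dots,\ell\}$ the lemma directly gives $\sum_{c\in C}|R\cap\phi^{-1}(c)|\le k+(k-1)+\cdots+(k-\ell+1)$, so the subpartition inequality from inside its proof need not be extracted separately.
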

\begin{proof}
Let $\phi$ be any coloring of $J'$ and denote by $q=(q_{i})$ the probability 
distribution realized by $\phi$. 
Let $U:=V(J') - (L\cup R)$  and write $q$ as $q=q^{R} + q^{L} + 
q^{U}$, where $q^{R}$ ($q^{L},q^{U}$) denotes the sequence $q$ where
only the contribution of vertices in $R$ ($L$, $U$ respectively) is taken into account.

By Lemma~\ref{lem-Jk}, the sequence $q^{R}$ is dominated by 
$$
q^{*, J_{k}} := \big(k/ |J'|, (k-1)/ |J'|, \dots, 1/ |J'|, 0, \dots\big).
$$
(Although the dominance relation was originally defined for probability distributions only, it 
naturally extends to a pair $q,r$ of sequences of
nonnegative real numbers with $\sum_{i\geq 1} q_{i} = \sum_{i\geq 1} r_{i}$.) 
The same holds for $q^{L}$. Moreover, $q^{U}$ is dominated by
$q^{*, U}$, where 
$$
q^{*, U}_{i}:= \left\{
\begin{array}{lll}
n / |J'| & & \textrm{if } i\in \{1,\dots,k\}, \\
0 & & \textrm{otherwise}.
\end{array}
\right.
$$
This is easily seen using the fact that $V(J')$ can be partitioned into $n$ cliques
of cardinality $k$. 
Since $q^{*} =  2q^{*, J_{k}} + q^{*, U}$, we deduce that $q^{*}$
dominates $q$. Hence, Claim~\ref{claim-hardness-chordal-dominates} holds.
\end{proof}

\begin{claim}
\label{claim-hardness-chordal-iff}
$G$ is $k$-colorable if and only if there exists a coloring of $J'$ realizing $q^*$.
\end{claim}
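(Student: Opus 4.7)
My plan is to prove both directions by tracking how the $k$ colors distribute across the three structural pieces $R$, $L$, and $U := V(J')-(L\cup R)$, using the fact that every color appears exactly once in each hanging $k$-clique (so $q^U = q^{*,U}$ is forced in any proper coloring of $J'$) together with the ``moreover'' clause of Lemma~\ref{lem-Jk} applied to each of $R$ and $L$.

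For $(\Rightarrow)$, I would start from a proper $k$-coloring $\psi$ of $G$. Since $\{a_1,\dots,a_k\}$ is a clique of $G$, the values $\psi(a_1),\dots,\psi(a_k)$ are a permutation of $\{1,\dots,k\}$, and after relabeling I may assume $\psi(a_i) = k-i+1$. I would then define $\phi$ on $J'$ by: (i) copying $\psi$ on the non-special vertices of $G'$; (ii) setting $\phi(r_i^0) = \phi(\ell_i^0) = k-i+1$; (iii) coloring the entire $i$-th row of $R$ and of $L$ with color $k-i+1$; and (iv) completing each hanging $k$-clique with its $k-1$ missing colors. Counting then yields $n + 2(k-i+1)$ vertices of color $k-i+1$, matching $q^*$. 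Properness reduces to checking that intervals of the form $r_i^j$ or $\ell_i^j$ with $j \ge 1$ only meet vertices already forbidden the color $k-i+1$ by $\psi$; this is where the geometric layout of the subdivision matters, since everything in $R$ (resp.\ $L$) lives in a region private to the clique $\{r_1,\dots,r_k\}$ (resp.\ $\{\ell_1,\dots,\ell_k\}$), while all original adjacencies are preserved through $r_i^0$ and $\ell_i^0$.

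For $(\Leftarrow)$, I would assume $\phi$ realizes $q^*$ and trace the inequalities in the proof of Claim~\ref{claim-hardness-chordal-dominates}: equality $q = q^*$ forces $q^R = q^{*,J_k}$, $q^L = q^{*,J_k}$ and $q^U = q^{*,U}$ simultaneously, so by Lemma~\ref{lem-Jk} each row of $R$ and of $L$ is monochromatic. Letting $a_c,b_c$ denote the sizes of the rows of color $c$ in $R$ and $L$, the sequences $(a_c)$ and $(b_c)$ are each permutations of $(1,\dots,k)$ and satisfy $a_c + b_c = 2(k-c+1)$; a descending induction starting at $c=1$ (which forces $a_1 = b_1 = k$) gives $a_c = b_c = k-c+1$. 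Next I would use the clique $\{r_1^0,\dots,r_k^0\}$ together with the cross-edges $r_i^0\,r_{i'}^1$ for $i' > i$, which are present because $r_i^0 = (y_i,-i/k)$ and $r_{i'}^1 = (-i'/k,1/i')$ overlap exactly when $i' > i$, to conclude inductively that $\phi(r_i^0) = k-i+1$, and symmetrically $\phi(\ell_i^0) = k-i+1$. Finally I would define $\psi$ on $V(G)$ by inheriting $\phi$ on non-special arcs and setting $\psi(a_i) := k-i+1$; this is a proper $k$-coloring of $G$ since any $v$ adjacent to $a_i$ in $G$ meets some piece of $\ell_i$ or $r_i$ in $J'$, and all such pieces carry color $k-i+1$.

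The main obstacle is the pigeonhole/induction step that pins down \emph{which} color lives on which row, which genuinely needs the combined constraints $a_c + b_c = 2(k-c+1)$ and $a_c,b_c \le k$ propagated downward from $c=1$. A secondary delicacy is the asymmetric extension of $r_i^1$ leftward to $-i/k$ rather than to $0$: this is precisely what creates the cross-edges $r_i^0\,r_{i'}^1$ with $i' > i$, and without them the color of $r_i^0$ would not be forced to match the row color of $r_i^1,\dots,r_i^i$, so the lift to a $k$-coloring of $G$ (where $\ell_i$ and $r_i$ must agree) would fail.
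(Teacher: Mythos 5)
Your proof is correct and follows essentially the same route as the paper: the $(\Rightarrow)$ direction is the natural extension of a $k$-coloring of $G$ to $J'$, and the $(\Leftarrow)$ direction forces $q^R=q^L=q^{*,J_k}$ and $q^U=q^{*,U}$ from equality in the domination argument of Claim~\ref{claim-hardness-chordal-dominates}, invokes the ``moreover'' clause of Lemma~\ref{lem-Jk} to color each row, uses the cross-edges $r_i^0 r_{i'}^1$ (for $i<i'$) to propagate the colors to the $r_i^0$'s, and then reads off a $k$-coloring of $G$. Two small remarks: the descending induction on $a_c+b_c=2(k-c+1)$ is redundant, since $q^R=q^{*,J_k}$ already gives $a_c=k-c+1$ directly; and in the $(\Rightarrow)$ count the class of color $k-i+1$ has size $n+2i$ (equivalently, color $c$ has size $n+2(k-c+1)$), not $n+2(k-i+1)$.
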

\begin{proof}
By the definition of $J'$, a $k$-coloring of $G$ can be readily extended
to a $k$-coloring of $J'$ realizing $q^*$. In order to prove the other
direction of the claim, assume that $\phi$ is a coloring of $J'$ realizing $q^{*}$.
Using the second part of Lemma~\ref{lem-Jk}, we know that $\phi$
colors the vertices of $R$ rowwise, that is, it assigns color $k-i+1$
to all vertices of the $i$-row $\{r_i^j\ |\ 1 \le j \le i\}$.  Also, as $\phi$ uses exactly
$k$ colors and $r_i^0$ is adjacent to  $r_{i'}^1$ for $1\leq i < i' \leq k$,
the vertex $r_i^0$ is also colored with color $k-i+1$. Since the same observations
apply also on $L$, from $\phi$ we easily derive a $k$-coloring of $G$.
Claim~\ref{claim-hardness-chordal-iff} follows.
\end{proof}

By combining Claims~\ref{claim-hardness-chordal-dominates} and~\ref{claim-hardness-chordal-iff}
with Lemma~\ref{lem-jungle}, we deduce the following: If
$G$ is $k$-colorable, then every minimum entropy coloring of $J'$ realizes $q^{*}$.
Furthermore, if $G$ is not $k$-colorable, then
no coloring of $J'$ realizes $q^{*}$. Therefore, by computing a minimum entropy coloring 
of the chordal graph $J'$, we could decide if the circular arc graph $G$ is $k$-colorable.
Theorem~\ref{hardchord} follows.
\end{proof}

We remark that, as pointed out by a referee,
the graph $J'$ in the above reduction is actually {\em strongly chordal}
(every cycle of even length at least $6$ has a chord splitting the cycle in two even cycles).
Thus the minimum entropy coloring problem remains NP-hard on (unweighted) strongly chordal graphs.
We note that interval graphs are a proper subclass of strongly chordal graphs, which in turn are a proper subclass of chordal graphs (see~\cite{BVS99}).

\paragraph{Greedy Coloring.}

The greedy algorithm can be used to find approximate minimum entropy colorings. In the context of coloring problems, the greedy algorithm involves iteratively removing a maximum independent set in the graph, assigning a new color to each removed set. This algorithm is polynomial for families of graphs in which a maximum (size or weight) independent set can be found in polynomial time, such as perfect graphs. 
We therefore have the following result.

\begin{corollary}
\label{cor:perfect_col}
The minimum entropy coloring problem can be approximated within an additive error of $\log e\ (\approx 1.4427)$ bits when restricted to perfect graphs.
\end{corollary}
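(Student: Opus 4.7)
The plan is to view the minimum entropy coloring problem as a specialization of minimum entropy set cover and then invoke Theorem~\ref{thm:apxsc} essentially as a black box. Given a graph $G=(V,E)$, set $U := V$ and let $\mathcal{S}$ be the collection of all (inclusion-wise maximal) independent sets of $G$. A proper coloring of $G$ is exactly an assignment $\phi : U \to \mathcal{S}$ with $v \in \phi(v)$ for all $v$, and the entropy objective is the same in both formulations. Thus minimum entropy coloring is an instance of minimum entropy set cover, and Theorem~\ref{thm:apxsc} already guarantees that the greedy algorithm (pick at each step a set covering the most uncovered elements, and assign those elements to it) outputs a coloring whose entropy exceeds $\OPT$ by at most $\log e$ bits.

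The only issue is efficiency, since $\mathcal{S}$ has in general exponential size. However, the greedy rule in the coloring setting simplifies: at iteration $i$, after the colored vertices have been removed, we need to find a maximum independent set in the induced subgraph $G_i := G[V \setminus (S_1 \cup \cdots \cup S_{i-1})]$, assign it a fresh color, and recurse. The second step I would carry out is to observe that the class of perfect graphs is closed under taking induced subgraphs, so every $G_i$ is perfect, and that a maximum independent set in a perfect graph can be computed in polynomial time via the Grötschel--Lovász--Schrijver algorithm based on the theta function. Therefore each greedy iteration runs in polynomial time, and since there are at most $|V|$ iterations the whole procedure is polynomial.

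Combining these two observations finishes the proof: on perfect graphs the greedy algorithm is a polynomial-time algorithm for minimum entropy coloring, and by Theorem~\ref{thm:apxsc} its output has entropy at most $\OPT + \log e$. I do not foresee a genuine obstacle; the only thing to double-check is that the greedy guarantee of Theorem~\ref{thm:apxsc} is not affected by the fact that we use the implicit (exponentially large) family $\mathcal{S}$ rather than an explicitly listed one, but the analysis in the proof of Theorem~\ref{thm:apxsc} depends only on which sets the greedy algorithm actually selects, so it applies verbatim.
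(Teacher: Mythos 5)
Your proof is correct and follows exactly the route the paper takes: view minimum entropy coloring as minimum entropy set cover with $\mathcal{S}$ the (implicitly given) family of independent sets, note that each greedy step reduces to computing a maximum independent set in the remaining induced subgraph, which is polynomial for perfect graphs by Gr\"otschel--Lov\'asz--Schrijver, and apply Theorem~\ref{thm:apxsc}. The observation that the guarantee of Theorem~\ref{thm:apxsc} only depends on the sets the greedy actually selects (and so carries over to the implicit, exponentially large family) is the same point the paper makes when it calls this an ``implicit instance'' of minimum entropy set cover.
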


We now show an example of an ``approximate greedy" algorithm yielding a bounded approximation error on any bounded-degree graphs. 

\begin{lemma}
\label{lem:beta-appx}
The algorithm that iteratively removes a $\beta$-approximate maximum independent set yields an approximation of the minimum entropy of a coloring within an additive error of $\log \beta + \log e$ bits.
\end{lemma}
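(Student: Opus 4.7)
The plan is to mimic the dual-fitting argument from the analysis of the greedy algorithm for minimum entropy set cover (following the proof of Theorem~\ref{thm:apxsc}). Recall that, as pointed out at the start of this section, the minimum entropy coloring problem is a particular instance of the minimum entropy set cover problem where $U := V(G)$ and $\mathcal{S}$ is the collection of (maximal) independent sets of $G$, so that $\mathcal{S}^{*}$ consists of all independent sets of $G$. The linear program~\eqref{primal} and its dual~\eqref{dual} are thus already available as lower bounds on $\OPT$.

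Let $S_{1}, \dots, S_{\ell}$ denote the independent sets selected by the approximate greedy algorithm in successive iterations, and let $g$ denote the entropy of the resulting coloring (as in~\eqref{eq:greedy}). For each $v \in U$, I would define the dual candidate
$$
\tilde y_{v} := -\frac{1}{n} \log \frac{|S_{i}|\cdot e \cdot \beta}{n},
$$
where $i$ is the unique index with $v \in S_{i}$. A direct computation, as in the proof of Theorem~\ref{thm:apxsc}, gives $\sum_{v\in U} \tilde y_{v} = g - \log e - \log \beta$, so feasibility of $\tilde y$ for~\eqref{dual} would immediately yield $\OPT \geq g - \log e - \log \beta$, which is the desired bound.

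To establish feasibility, fix an arbitrary independent set $S \in \mathcal{S}^{*}$ and set $a_{i} := |S \cap S_{i}|$. The key new ingredient, replacing the inequality $|S_{i}| \geq |S| - \sum_{j<i} a_{j}$ used in the exact greedy analysis, is the following: at the $i$th iteration, the uncovered part $S \setminus \bigcup_{j<i} S_{j}$ is still an independent set in the residual graph (being a subset of the independent set $S$), so the maximum independent set in that residual graph has cardinality at least $|S| - \sum_{j=1}^{i-1} a_{j}$. Since $S_{i}$ is a $\beta$-approximate maximum independent set of the residual graph, we get
$$
\beta \cdot |S_{i}| \geq |S| - \sum_{j=1}^{i-1} a_{j}.
$$
Raising to the power $a_{i}$ and taking the product over $i$ yields, in analogy with~\eqref{eq:S},
$$
\prod_{i=1}^{\ell} (\beta \cdot |S_{i}|)^{a_{i}} \geq \prod_{i=1}^{\ell} \left(|S| - \sum_{j=1}^{i-1} a_{j}\right)^{a_{i}} \geq |S|!,
$$
where the second inequality is~\eqref{eq:factorial}.

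From here, the rest of the computation in the proof of Theorem~\ref{thm:apxsc} goes through essentially unchanged, after replacing $|S_{i}|$ by $\beta \cdot |S_{i}|$ inside the logarithm: combining with $|S|! \geq (|S|/e)^{|S|}$ gives $\sum_{v \in S} \tilde y_{v} \leq -\frac{|S|}{n} \log \frac{|S|}{n}$, proving feasibility. I expect no real obstacle here; the only point to verify carefully is the reduction to the residual-independence argument above, which crucially uses that $S$ itself is independent so that any subset of $S$ remains a candidate independent set in the residual graph (this is what couples the $\beta$-approximation guarantee to the ``local'' structure of $S$).
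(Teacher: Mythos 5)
Your proof is correct, but it takes a genuinely different route from the paper's. The paper's proof of Lemma~\ref{lem:beta-appx} is self-contained and argues directly with the product reformulation: it fixes a color class $S$ of an \emph{optimal} coloring, orders the vertices of $S$ by when the approximate greedy algorithm colors them, observes that the $t$-th such vertex lands in a greedy class of size at least $(|S|-t+1)/\beta$, and multiplies to obtain $\prod_{\phi_G} \geq \prod_i |S_i|!/\beta^{|S_i|}$ with the $S_i$ ranging over optimal classes. You instead re-import the LP dual-fitting machinery from the set cover section, view the coloring instance as a set cover instance with $\mathcal{S}^{*}$ the independent sets, fit the dual vector $\tilde y_v = -\frac1n\log\frac{|S_i|e\beta}{n}$ with the $S_i$ ranging over \emph{greedy} classes, and verify feasibility via the inequality $\beta|S_i| \geq |S| - \sum_{j<i}a_j$; the key observation justifying this inequality --- that the uncovered part of $S$ remains independent in the residual graph, so the $\beta$-approximate IS at iteration $i$ has size at least $(|S|-\sum_{j<i}a_j)/\beta$ --- is exactly the combinatorial heart shared with the paper's argument, just deployed in the dual rather than primal direction. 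The two proofs thus hinge on the same factorial bound but differ in packaging: the paper's is more elementary and local to the coloring section, while yours has the virtue of exhibiting the $\beta$-approximate greedy analysis as a literal one-line perturbation (insert a factor $\beta$) of the minimum entropy set cover dual fitting, making the relationship between Theorem~\ref{thm:apxsc} and Lemma~\ref{lem:beta-appx} completely transparent. Both are valid; the steps in your proposal, including the feasibility chain ending in $\sum_{v\in S}\tilde y_v \leq -\frac{|S|}{n}\log\frac{|S|}{n}$, check out.
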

\begin{proof}
The entropy of a coloring $\phi$ can be rewritten as (letting $n:=|V|$):
\begin{equation}
\label{logtoprod}
- \sum_i \frac{|\phi^{-1}(i)|}{n}  \log \frac{|\phi^{-1}(i)|}{n} = -\frac 1n \sum_{v\in V} \log \frac{|\phi^{-1}(\phi(v))|}{n} .
\end{equation}
Thus minimizing the entropy of a coloring $\phi : V\to \mathbb{N}^{+}$ of $G$ is equivalent to maximizing the following product of the color class sizes over all vertices:
$$
\prod_{\phi} := \prod_{v\in V} |\phi^{-1}(\phi (v))| .
$$
Consider a color class $S$ in an optimal coloring $\phi_{OPT}$ and the order in which the vertices of this subset are colored by the approximate greedy algorithm (breaking ties arbitrarily).
Let us denote by $\phi_G$ the coloring obtained with the approximate greedy algorithm.
The first vertex in this order is assigned by $\phi_G$ to a color class that has size at least $|S|/\beta$ (since at that stage, we know there exists an independent set of size at least $|S|$). 
The next vertex is assigned to a color class of size at least $(|S|-1)/\beta$, and so on.
Hence the product of the color class sizes restricted to the elements of $S$ is at least $|S|!/\beta^{|S|}$. 
In the optimal solution, however, this product is $|S|^{|S|}$ by definition. 
Hence, denoting by $S_i$ the color classes of an optimal solution, the approximate greedy algorithm yields at least the following product:
\begin{eqnarray}
\prod_{\phi_G}\geq \prod_i \frac{|S_i|!}{\beta^{|S_i|}} & \geq & \prod_i \frac{|S_i|^{|S_i|}}{(e\cdot \beta)^{|S_i|}} = \frac{\prod_{\phi_{OPT}}}{(e\cdot \beta)^n},
\end{eqnarray}
Converting this back to entropies using equation~\ref{logtoprod} yields the claimed result.
\end{proof}

\begin{theorem}
\label{thm:greedcol}
Minimum entropy coloring is approximable in polynomial time within an additive error of $\log (\Delta + 2) - 0.1423$ bits on graphs with maximum degree $\Delta$.
\end{theorem}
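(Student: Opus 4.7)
The plan is to apply Lemma~\ref{lem:beta-appx} with an off-the-shelf polynomial-time approximation algorithm for maximum independent set on bounded-degree graphs. The first step is to figure out what approximation ratio $\beta$ is needed so that the additive error $\log\beta + \log e$ delivered by the lemma matches the target $\log(\Delta+2) - 0.1423$. Using $\log 3 \approx 1.5850$ and $\log e \approx 1.4427$, one checks that $\log 3 - \log e \approx 0.1423$, so the correct choice is $\beta = (\Delta+2)/3$.

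The second step is to recall that Halld\'orsson and Radhakrishnan's algorithm achieves precisely this ratio: on any graph of maximum degree $\Delta$, it computes in polynomial time an independent set whose size is within a factor $(\Delta+2)/3$ of the maximum. Since every induced subgraph of $G$ still has maximum degree at most $\Delta$, this guarantee continues to hold at each iteration of the ``approximate greedy'' procedure described in Lemma~\ref{lem:beta-appx}. Therefore this algorithm can legitimately play the role of the $\beta$-approximation inside the lemma.

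Plugging $\beta = (\Delta+2)/3$ into the lemma then yields an additive error of
$$
\log \beta + \log e \;=\; \log\frac{\Delta+2}{3} + \log e \;=\; \log(\Delta+2) - (\log 3 - \log e) \;\approx\; \log(\Delta+2) - 0.1423
$$
bits, which is exactly the claimed bound. There is no real obstacle here: both the reduction (Lemma~\ref{lem:beta-appx}) and the independent-set approximation are already available, and the content of the theorem is really the observation that the constants $\log 3$ and $\log e$ combine cleanly to the stated $0.1423$ when one pairs these two tools.
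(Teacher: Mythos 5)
Your proof is correct and is essentially the paper's proof verbatim: both invoke Lemma~\ref{lem:beta-appx} with the $(\Delta+2)/3$-approximation algorithm of Halld\'orsson and Radhakrishnan and compute $\log\bigl((\Delta+2)/3\bigr) + \log e = \log(\Delta+2) - (\log 3 - \log e) \approx \log(\Delta+2) - 0.1423$. The only addition is your (worthwhile) remark that the maximum-degree bound is inherited by induced subgraphs, so the approximation guarantee persists across iterations of the approximate greedy procedure.
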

\begin{proof}
It is known that for graphs with maximum degree at most $\Delta$, a $\rho$-approximate maximum 
independent set can be computed in polynomial time for $\rho := (\Delta  + 2)/3$
(see Halld\'orsson and Radhakrishnan~\cite{greed}).
We have
$$
\log \rho + \log e = \log (\Delta  + 2) + \log e - \log 3 < \log(\Delta  + 2) - 0.1423.
$$
Hence, the claim follows from Lemma~\ref{lem:beta-appx}.
\end{proof}

\paragraph{Coloring Interval Graphs.}

We give the following simple polynomial-time algorithm for approximating the minimum entropy coloring problem on
(unweighted) interval graphs. This algorithm is essentially the online coloring algorithm proposed by Kierstead and Trotter~\cite{KT},
but in which the intervals are given in a specific order, and the intervals in $S_i$ are 2-colored offline.

{\tt
\begin{enumerate}
\item sort the intervals in increasing order of their right endpoints; let $(v_1, v_2,\ldots ,v_n)$ be this ordering
\item {\bf for} $j\gets 1$ to $n$ {\bf do}
\begin{itemize}
\item insert $v_j$ in $S_i$, where $i$ is the smallest index such that 
$S_1\cup S_2\cup\ldots S_i\cup\{ v_j\}$ does not contain an $(i+1)$-clique
\end{itemize}
\item color the intervals in $S_1$ with color 1
\item let $k$ be the number of nonempty sets $S_i$
\item {\bf for} $i\gets 2$ to $k$ {\bf do}
\begin{itemize}
\item \label{2col} color the intervals in $S_i$ with colors $2i-2$ and $2i-1$
\end{itemize}
\end{enumerate}
}

This algorithm is also similar to the {\tt BETTER-MCA} algorithm introduced by Pemmaraju, Raman, and Varadarajan for the max-coloring 
problem~\cite{PRK04}, but instead of sorting the intervals in order of their weights, we sort them in order of their right endpoints. It achieves the same 
goal as the algorithm of Nicoloso, Sarrafzadeh, and Song~\cite{NSS99} for minimum sum coloring.

\begin{lemma}
\label{mics}
At the end of the algorithm, the graph induced by the vertices in $S_1\cup S_2 \cup \ldots \cup S_i$ is a maximum $i$-colorable subgraph of $G$.
\end{lemma}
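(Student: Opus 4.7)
I would prove Lemma \ref{mics} by induction on $j$, the number of intervals processed by the algorithm, with invariant: for every $i \geq 1$, $A_i(j) := (S_1 \cup \cdots \cup S_i) \cap T_j$ (where $T_j := \{v_1,\dots,v_j\}$) induces a maximum $i$-colorable subgraph of $T_j$; the lemma is the $j = n$ instance. A preliminary observation streamlines matters: by the algorithm's insertion rule, $v_j$ belongs to $S_1 \cup \cdots \cup S_i$ if and only if $A_i(j-1) \cup \{v_j\}$ is $i$-colorable, so $A_i(\cdot)$ coincides exactly with the output of the classical greedy-by-right-endpoint algorithm for the maximum $i$-colorable subgraph problem on interval graphs (sort by right endpoint; accept each interval if the current set stays $i$-colorable).

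The base case is trivial. For the inductive step, let $i^*$ be such that $v_j \in S_{i^*}$, and write $M_i(T_j)$ for the max $i$-colorable subgraph size of $T_j$. When $i \geq i^*$, we have $|A_i(j)| = |A_i(j-1)| + 1 = M_i(T_{j-1}) + 1$ by the induction hypothesis; combined with $M_i(T_j) \leq M_i(T_{j-1}) + 1$ and the $i$-colorability of $A_i(j)$, this yields $|A_i(j)| = M_i(T_j)$. When $i < i^*$, instead, $A_i(j) = A_i(j-1)$ and we need $M_i(T_j) = M_i(T_{j-1})$; otherwise, some $i$-colorable $X \subseteq T_j$ with $|X| = M_i(T_{j-1}) + 1$ must contain $v_j$, whence $X' := X \setminus \{v_j\}$ is a maximum $i$-colorable subgraph of $T_{j-1}$ with $X' \cup \{v_j\}$ still $i$-colorable.

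The main obstacle is the following key claim: whenever some maximum $i$-colorable subgraph of $T_{j-1}$ accommodates $v_j$, the greedy-maximum $A_i(j-1)$ does too -- in our setting this would contradict $i < i^*$. I plan to argue this via an exchange argument tailored to the right-endpoint ordering. If $A_i(j-1) \cup \{v_j\}$ is not $i$-colorable while $X' \cup \{v_j\}$ is, then at some point $p^* \in v_j$ we have $|A_i(j-1) \cap B_{p^*}| = i$ and $|X' \cap B_{p^*}| \leq i - 1$ (where $B_p$ denotes the intervals containing $p$), and a counting argument produces $a \in (A_i(j-1) \setminus X') \cap B_{p^*}$, with $r_a \in [p^*, r_{v_{j-1}}]$ since $a$ was accepted before $v_j$. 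Iteratively swapping such intervals of $A_i(j-1) \setminus X'$ at $p^*$ for suitable intervals of $X' \setminus A_i(j-1)$ of sufficiently large right endpoint, each swap preserving $i$-colorability and maximum size while strictly reducing the load at $p^*$, should eventually produce a maximum $i$-colorable subgraph of $T_{j-1}$ with load less than $i$ at $p^*$; exploiting the greedy's right-endpoint preference then forces this transformed set to coincide with $A_i(j-1)$ on the relevant part, the desired contradiction. Alternatively, one may appeal to LP duality: since the max $i$-colorable subgraph LP on interval graphs has a totally unimodular constraint matrix, a feasible dual of value $|A_i(j-1)|$ can be constructed directly from the greedy's rejection witnesses, certifying optimality.
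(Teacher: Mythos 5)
The paper's own proof of Lemma~\ref{mics} is a single sentence: it observes that the construction of $S_1\cup\cdots\cup S_i$ coincides, step by step, with the greedy algorithm of Yannakakis and Gavril for computing a maximum $i$-colorable subgraph of an interval graph (scan intervals in increasing right-endpoint order, accept an interval whenever the accepted set stays $i$-colorable), and then cites \cite{YG87} for the correctness of that greedy. Your plan begins with the same identification and then attempts to re-prove correctness from scratch, which is a legitimate and more self-contained route.

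However, there is a genuine gap right at the start, and it propagates. You assert ``by the algorithm's insertion rule, $v_j \in S_1\cup\cdots\cup S_i$ if and only if $A_i(j-1)\cup\{v_j\}$ is $i$-colorable.'' Only the ``if'' direction is immediate from the rule (if $A_i(j-1)\cup\{v_j\}$ is $i$-colorable, then the smallest admissible index $i^*$ satisfies $i^*\le i$). The ``only if'' direction is \emph{not} a restatement of the rule: the rule only guarantees that $A_{i^*}(j-1)\cup\{v_j\}$ has no $(i^*+1)$-clique for the \emph{smallest} such $i^*$; nothing in the rule by itself prevents $A_i(j-1)\cup\{v_j\}$ from containing an $(i+1)$-clique for some larger $i>i^*$, and in fact a single $S_{i'}$ with $i'>i^*$ can contribute two intervals through a point of $v_j$ (the sets $S_{i'}$ are merely bipartite, not ``load $\le 1$ per point''). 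Establishing the ``only if'' direction is exactly the statement that $A_i(j)$ is $i$-colorable for every $i$ and $j$, i.e.\ the lemma applied to the prefix $T_j$. You then \emph{use} ``the $i$-colorability of $A_i(j)$'' in the case $i\ge i^*$ of your induction, so this assumption is load-bearing. Your ``key claim'' exchange argument is invoked only to handle $i<i^*$; it would also have to be invoked (or a separate argument given) to justify $i$-colorability of $A_i(j)$ when $i>i^*$, but as written that case is taken for granted.

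Finally, the exchange argument itself is only sketched: the existence of a suitable interval of $X'\setminus A_i(j-1)$ to swap in at every step, termination of the swap sequence, and the concluding step ``exploiting the greedy's right-endpoint preference then forces this transformed set to coincide with $A_i(j-1)$ on the relevant part'' are all asserted rather than argued. The LP/TU alternative you mention is similarly gestured at without detail. In short, the plan reproduces the observation the paper makes but replaces the citation \cite{YG87} with an incomplete re-derivation whose central invariant (that each $A_i(j)$ stays $i$-colorable) is assumed where it most needs to be proved.
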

\begin{proof}
If we consider the construction of the set $S_1\cup S_2\ldots \cup S_i$, 
it matches exactly with an execution of the algorithm of Yannakakis and Gavril~\cite{YG87} for constructing a maximum $i$-colorable subgraph.
\end{proof}

Kierstead and Trotter~\cite{KT} proved the following lemma, that shows that step~\ref{2col} of the algorithm is always feasible. 
\begin{lemma}[\cite{KT}]
\label{2colSi}
The graphs induced by the sets $S_i$ are bipartite.
\end{lemma}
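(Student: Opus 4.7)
The plan is to show that each $S_i$ is triangle-free. Because $S_i$ induces an interval graph (being an induced subgraph of $G$) and interval graphs are perfect, this implies $\chi(G[S_i]) = \omega(G[S_i]) \le 2$, so $G[S_i]$ is bipartite.

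The main technical tool is to associate to every $v_* \in S_i$ a \emph{witness point} $p_{v_*} \in v_*$. Since $v_*$ was not placed in $S_{i-1}$, at the time $v_*$ is processed, adding $v_*$ to $S_1 \cup \ldots \cup S_{i-1}$ must create an $i$-clique. This $i$-clique necessarily contains $v_*$ together with an $(i-1)$-clique $K$ of already-placed intervals all intersecting $v_*$; by the Helly property of intervals, $\{v_*\} \cup K$ has a common point $p_{v_*}$. At that moment, $p_{v_*}$ is covered exactly $i-1$ times by $S_1 \cup \ldots \cup S_{i-1}$, and since Lemma~\ref{mics} caps $\omega(S_1 \cup \ldots \cup S_{i-1})$ at $i-1$, this count stays at $i-1$ throughout the algorithm. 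Consequently, at termination, $p_{v_*}$ is covered exactly $i$ times by $S_1 \cup \ldots \cup S_i$: once by $v_*$ itself and $i-1$ times by the intervals of $K$. Applying Lemma~\ref{mics} to $S_1 \cup \ldots \cup S_i$ then forbids any other interval of $S_i$ from containing $p_{v_*}$.

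Now suppose for contradiction that $u, v, w \in S_i$ pairwise intersect. The Helly property supplies a common point $q$, and after reordering we may assume $r_u \le r_v \le r_w$, so $q \le r_u$. Writing $\ell_{\cdot}, r_{\cdot}$ for the left and right endpoints of an interval, the witness property forbids $v$ and $w$ from containing $p_u$; together with $p_u \le r_u \le r_v, r_w$ this forces $p_u < \ell_v$ and $p_u < \ell_w$. The same reasoning gives $p_v < \ell_w$, and $u$ likewise cannot contain $p_v$: so either $p_v > r_u$ or $p_v < \ell_u$. The first possibility is killed by the chain $p_v > r_u \ge q \ge \ell_w > p_v$, while in the second case $\ell_v \le p_v < \ell_u$ combined with $\ell_u \le p_u < \ell_v$ gives $\ell_u < \ell_v < \ell_u$; both are contradictions. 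The main obstacle is precisely this last case split: the witness argument alone leaves open the possibility that $p_v$ lies to the right of $u$, and one needs the exact location of the common point $q$ to eliminate it.
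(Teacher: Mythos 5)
The paper cites Kierstead and Trotter for this lemma without reproducing a proof, so yours is a self-contained reconstruction rather than a paraphrase of the paper's argument, and it is correct. The key is the ``private point'' idea: for $i \ge 2$, each interval $v_* \in S_i$ inherits, from the $(i-1)$-clique that blocked it from level $i-1$, a point $p_{v_*}$ that no other interval of $S_i$ can contain, because an extra interval through $p_{v_*}$ would raise the clique number of $S_1 \cup \cdots \cup S_i$ above $i$. The triangle-free contradiction obtained by choosing a Helly point $q$ of the three intervals and sorting by right endpoint is clean, and since an induced subgraph of an interval graph is perfect, triangle-free indeed gives $\chi \le 2$. Two small remarks. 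First, the argument implicitly assumes $i \ge 2$ (there is no ``not placed in $S_{i-1}$'' when $i=1$); the case $i=1$ should be handled separately by noting that $S_1$ is by construction an independent set, hence trivially bipartite. Second, you do not actually need the full strength of Lemma~\ref{mics}: the placement rule already guarantees, after each insertion, that $S_1 \cup \cdots \cup S_j$ contains no $(j+1)$-clique, and this running invariant is all you invoke, both to fix the multiplicity of $p_{v_*}$ at exactly $i-1$ on the lower levels and to exclude a second interval of $S_i$ through $p_{v_*}$.
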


Let $H'$ be the entropy of the probability distribution $\{ |S_i| / n\}$. The proof of the following relies on Lemma~\ref{mics}. 
\begin{lemma}
\label{lem:lb}
$H'$ is a lower bound on the minimum entropy of a coloring of the interval graph.
\end{lemma}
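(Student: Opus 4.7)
The plan is to reduce the statement to the two tools already built up in the excerpt: Lemma~\ref{mics}, which identifies $S_1 \cup \cdots \cup S_i$ as a maximum $i$-colorable subgraph of $G$, and Lemma~\ref{lem-jungle}, which compares entropies via dominance. The strategy is to show that for any proper coloring of $G$, the sorted distribution of its color class sizes is dominated (in the sense of the excerpt) by $(|S_i|/n)$, and then let Lemma~\ref{lem-jungle} finish the job.

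Concretely, I would fix an arbitrary proper coloring $\phi$ of $G$ and write $c_1 \geq c_2 \geq \cdots \geq c_m$ for the sizes of its color classes sorted in nonincreasing order. The key observation is that for every $\ell \geq 1$, the union of the $\ell$ largest color classes of $\phi$ induces an $\ell$-colorable subgraph of $G$ containing $c_1 + \cdots + c_\ell$ vertices. Since Lemma~\ref{mics} asserts that $S_1 \cup \cdots \cup S_\ell$ is a \emph{maximum} $\ell$-colorable subgraph of $G$, its size $|S_1| + \cdots + |S_\ell|$ is an upper bound, giving
$$
\sum_{i=1}^{\ell} c_i \;\leq\; \sum_{i=1}^{\ell} |S_i| \qquad \text{for every } \ell \geq 1.
$$
Dividing through by $n$ shows that the nonincreasing probability distribution $q := (c_i/n)$ is dominated by the probability distribution $r := (|S_i|/n)$.

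Finally, I would apply Lemma~\ref{lem-jungle} to $q$ and $r$: since $q$ is nonincreasing and dominated by $r$, the lemma yields $H(q) \geq H(r) = H'$. The left-hand side is precisely the entropy of the coloring $\phi$, so taking $\phi$ to be a minimum entropy coloring of $G$ gives the desired inequality. There is no real obstacle here; the only points requiring a touch of care are (i) matching the correct sequence ($c_i/n$, not $|S_i|/n$) with the nonincreasing side of Lemma~\ref{lem-jungle}, and (ii) noting that the dominance inequality derived from Lemma~\ref{mics} does not require $(|S_i|)$ itself to be sorted, only that prefix sums of the two sequences compare correctly.
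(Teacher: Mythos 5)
Your proof is correct and follows essentially the same route as the paper's: sort the color class sizes of an arbitrary coloring in nonincreasing order, use Lemma~\ref{mics} to obtain the prefix-sum (dominance) inequality, and conclude via Lemma~\ref{lem-jungle}. The only difference is cosmetic --- you make explicit the observation that the $\ell$ largest color classes form an $\ell$-colorable subgraph and verify the nonincreasing hypothesis of Lemma~\ref{lem-jungle}, both of which the paper leaves implicit.
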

\begin{proof}
Let us consider a coloring with color classes $Q_i$.
We can assume without loss of generality that the classes $Q_i$ are labeled in order of decreasing sizes, that is, $|Q_{i+1}|\leq |Q_i|$. 
For any $t\leq k$:
$$
\sum_{i=1}^t |S_i| \geq \sum_{i=1}^t |Q_i|,
$$
since from Lemma~\ref{mics} the set $S_1\cup S_2\cup\ldots \cup S_t$ induces a maximum $t$-colorable subgraph of $G$.
It follows from Lemma~\ref{lem-jungle} 
that the entropy of the distribution $\{|S_i|/n\}$ is smaller or equal to that of $\{|Q_i|/n\}$.
\end{proof}

We deduce the following.
\begin{theorem}
\label{thm:interv1}
On (unweighted) interval graphs, the minimum entropy coloring problem can be approximated within an additive error of 1 bit, in polynomial time.
\end{theorem}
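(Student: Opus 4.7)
The plan is to compare the entropy $H$ of the coloring actually produced by the algorithm with the lower bound $H'$ from Lemma~\ref{lem:lb}, and to show that $H \le H' + 1$. Since $H' \le \OPT$, this immediately gives the desired additive guarantee. Polynomiality is clear: sorting the intervals takes $O(n\log n)$, the Yannakakis--Gavril-style construction of the $S_i$'s is polynomial, and $2$-coloring each bipartite $S_i$ (whose existence is guaranteed by Lemma~\ref{2colSi}) is done in linear time per block.

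For the entropy comparison, note that $S_1$ is independent (otherwise step~2 would have placed $v_j$ in a later set to avoid creating a $2$-clique), so color $1$ has exactly $|S_1|$ vertices. For $i\ge 2$, the algorithm splits $S_i$ into two independent sets of sizes $a_i$ and $b_i$ with $a_i + b_i = |S_i|$. A direct algebraic expansion gives
\begin{equation*}
-\frac{a_i}{n}\log\frac{a_i}{n}-\frac{b_i}{n}\log\frac{b_i}{n}
= -\frac{|S_i|}{n}\log\frac{|S_i|}{n} + \frac{|S_i|}{n}\,h_i,
\end{equation*}
where $h_i := -\frac{a_i}{|S_i|}\log\frac{a_i}{|S_i|}-\frac{b_i}{|S_i|}\log\frac{b_i}{|S_i|}$ is the binary entropy of the split. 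Summing this identity over $i\ge 2$ and adding the $S_1$ term yields
\begin{equation*}
H = H' + \sum_{i=2}^{k} \frac{|S_i|}{n}\,h_i.
\end{equation*}

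To finish, I use $h_i \le 1$ (maximum of the binary entropy) together with $\sum_{i=2}^{k} |S_i|/n = 1 - |S_1|/n \le 1$. This gives $H \le H' + 1 \le \OPT + 1$, as required.

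There is no real obstacle here; the only thing to be careful about is the bookkeeping between the block-level distribution $\{|S_i|/n\}$ and the final color-class distribution, and recognizing that the ``extra'' entropy contributed by splitting each $S_i$ into two color classes is exactly the weighted binary entropy of the split, which is bounded by the weight $|S_i|/n$. Since these weights sum to at most $1$, the whole additive loss is at most $1$ bit.
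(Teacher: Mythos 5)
Your proposal is correct and takes essentially the same approach as the paper: compare the produced coloring to the block distribution $\{|S_i|/n\}$ (whose entropy is the lower bound $H'$ from Lemma~\ref{lem:lb}), and observe that $2$-coloring each $S_i$ loses at most $1$ bit. The paper states this last step in one sentence; you supply the algebra via the grouping (chain-rule) identity for entropy, which is a nice way to make the ``$+1$ bit'' claim precise.
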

\begin{proof}
The entropy of the coloring produced by the algorithm is at most that of the distribution $\{ |S_i| / n\}$ plus one bit, since
the intervals in $S_i$ are colored with at most two colors. From Lemma~\ref{lem:lb} the former is a lower bound on the optimum, 
and we get the desired approximation.
\end{proof}

To conclude this section, we mention that
improved approximation results could be obtained on other special classes of graphs, 
using for instance the methods developed by Fukunaga, Halld\'orsson, and Nagamochi~\cite{FHN-soda,FHN07}.

\section{Graph Entropy and Partial Order Production}
\label{sec:ent}

The notion of graph entropy was introduced by K\"orner in 1973~\cite{K73}, and was initially motivated by a source coding problem.
It has since found a wide range of applications (see for instance the survey by Simonyi~\cite{S95}). 

The entropy of a graph $G=(V,E)$ can be defined in several ways. We now give a purely combinatorial (and not information-theoretic) definition, and restrict ourselves to unweighted (or, more precisely, uniformly weighted) graphs. Let us consider a probability distribution $\{ q_S\}$ on the independent sets $S$ of $G$, and denote by $p_v$ the probability that $v$ belongs to an independent set drawn at random from this distribution (that is, $p_v := \sum_{S \ni v} q_S$). Then the entropy of $G$ is the minimum over all possible distributions $\{ q_S\}$ of
\begin{equation}
\label{eq:minlogp}
- \frac1n \sum_{v\in V} \log p_v.
\end{equation}
The feasible vectors $(p_v)_{v\in V}$ form the stable set polytope $STAB(G)$ of the graph $G$, defined as the following convex combination:
$$
STAB (G) := \mathrm{conv} \{ \mathbf{1}^S : S \text{ independent set of } G \},
$$
where $\mathbf{1}^S$ is the characteristic vector of the set $S\subseteq V$, assigning the value 1 to
vertices in $S$, and 0 to the others. Thus, the entropy of $G$ can be written as
$$
H(G) := \min_{p\in STAB(G)} -\frac1n \sum_{v\in V} \log p_v.
$$

The relation between the graph entropy and the chromatic entropy (the minimum entropy of a coloring) can be made clear
from the following observation: if we restrict the vector $p$ in the definition of $H(G)$ to be a convex combination of characteristic vectors of {\em disjoint\/} independent sets, then the minimum of (\ref{eq:minlogp}) is equal to the chromatic entropy (see~\cite{CFJ07} for a rigorous development). Hence, the mathematical program defining the graph entropy is a relaxation of that defining the chromatic entropy. It follows that the graph entropy is a lower bound on the chromatic entropy.\medskip

Recently, we showed that the greedy coloring algorithm, that iteratively colors and removes a maximum independent set, yields a good approximation of the graph entropy, provided the graph is perfect.

\begin{theorem}[\cite{POP_SICOMP}]
\label{thm:colapxent}
Let $G$ be a perfect graph, and $g$ be the entropy of a greedy coloring of $G$. Then
$$
g \leq H(G) + \log (H(G) + 1) + O(1).
$$
\end{theorem}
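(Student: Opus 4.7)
The plan is to prove the stronger statement that $G$ admits a proper coloring of entropy at most $H(G) + \log(H(G)+1) + O(1)$, and then to invoke Corollary~\ref{cor:perfect_col}, which says that the greedy coloring lies within $\log e$ bits of the chromatic entropy on perfect graphs. The constructed coloring will be read off from the optimizer $p^{*}\in STAB(G)$ of the convex program defining $H(G)$, via a dyadic bucketing of the vertices: set
$$
V_k := \{v\in V : 2^{-k-1} < p^{*}_v \leq 2^{-k}\}
$$
for each integer $k \geq 0$.

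The key point is that perfection converts the fractional information in $p^{*}$ into an integral coloring on each bucket. Indeed, for every $v\in V_k$ one has $2^{k+1}p^{*}_v > 1$, which after restricting the support of $p^{*}$ to independent sets meeting $V_k$ yields a fractional cover of $V_k$ by independent sets of $G[V_k]$ of total weight at most $2^{k+1}$; hence $\chi_f(G[V_k]) \leq 2^{k+1}$. Since induced subgraphs of perfect graphs are perfect and $\chi = \chi_f$ on perfect graphs, $\chi(G[V_k]) \leq 2^{k+1}$. Color each $G[V_k]$ properly with $\chi(G[V_k])$ colors, using disjoint color palettes across different values of $k$; the result is a proper coloring $\phi$ of $G$.

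Next, I would estimate the entropy of $\phi$ via the chain rule. Let $K$ be the random index $k$ such that a uniformly chosen vertex lies in $V_k$, so that $\Pr[K = k] = |V_k|/n$. Then
$$
H(\phi) = H(K) + \sum_k \frac{|V_k|}{n}\, E_k,
$$
where $E_k$ is the entropy of $\phi$ restricted to $V_k$ (with $V_k$ as its own universe). Trivially $E_k \leq \log \chi(G[V_k]) \leq k+1$, so the second summand is at most $E[K] + 1$. By construction $-\log p^{*}_v \geq k$ for $v\in V_k$, and so $E[K] \leq -\tfrac{1}{n}\sum_v \log p^{*}_v = H(G)$. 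Finally, a standard geometric-max-entropy argument shows that any random variable on $\{0,1,2,\ldots\}$ with mean at most $\mu$ has entropy at most $\log(\mu+1) + \mu\log(1+1/\mu) \leq \log(\mu+1) + \log e$; applied with $\mu = H(G)$ this gives $H(K) \leq \log(H(G)+1) + \log e$. Combining, $H(\phi) \leq H(G) + 1 + \log(H(G)+1) + \log e$, and Corollary~\ref{cor:perfect_col} concludes the proof after absorbing constants.

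The main obstacle is really the calibration of the dyadic partition: it must be fine enough that the conditional term $\sum_k (|V_k|/n) E_k$ grows only like $E[K] \leq H(G)$, yet coarse enough that $H(K)$ is at most logarithmic in $E[K]$. Once the partition is in place, the perfection of $G$ is used in a single essential step, namely to translate the LP bound $\chi_f(G[V_k]) \leq 2^{k+1}$ into a genuine chromatic number bound, and the remaining work is accounting.
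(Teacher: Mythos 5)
Your argument is correct, but it takes a genuinely different route from the paper. The proof in \cite{POP_SICOMP} is a dual-fitting argument built on the perfect-graph identity $H(G) + H(\bar{G}) = \log n$: it analyzes the greedy coloring directly by exhibiting a feasible dual solution, in the spirit of the dual-fitting analysis of the greedy algorithm for minimum entropy set cover given earlier in this survey. You instead prove the stronger intermediate claim that the chromatic entropy itself is at most $H(G) + \log(H(G)+1) + O(1)$, by constructing an explicit coloring from the optimizer $p^{*}$ of the program defining $H(G)$ --- dyadic bucketing by $p^{*}_v$, the fractional-cover bound $\chi_f(G[V_k]) \leq 2^{k+1}$ inherited from $p^{*}$ and upgraded to $\chi(G[V_k]) \leq 2^{k+1}$ via $\chi = \chi_f$ on induced subgraphs of perfect graphs, the chain rule, and the geometric maximum-entropy bound on $H(K)$ --- and then invoke Corollary~\ref{cor:perfect_col} to transfer this bound to the greedy output. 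Your primal construction is clean, has the virtue of bounding the chromatic entropy and not merely the greedy coloring, and makes the role of perfection transparent (integrality of fractional colorings on each bucket). The paper's dual fitting avoids the detour through an explicit coloring and through Corollary~\ref{cor:perfect_col}, obtaining the bound on $g$ in a single argument and using perfection only once, via the split-entropy identity, whereas you invoke it twice (once per bucket for $\chi=\chi_f$, and once inside Corollary~\ref{cor:perfect_col}).
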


The proof of Theorem \ref{thm:colapxent} is a dual fitting argument based on the identity $H(G) + H(\bar{G}) = \log n$, that holds whenever $G$ is perfect (Czisar, K\"orner, Lov\'asz, Marton and Simonyi~\cite{CKLMS90} characterized the perfect graphs as the graphs that ``split entropy'', for all probability distribution on the vertices). In particular, Theorem~\ref{thm:colapxent} implies that the chromatic entropy of a perfect graph never exceeds its entropy by more than $\log \log n + O(1)$. It turns out that this is essentially tight~\cite{POP_SICOMP}.\medskip

Theorem~\ref{thm:colapxent} is the key tool in a recent algorithm for the partial order production problem~\cite{POP_SICOMP}. In this problem, we want to bijectively map a set $T$ of objects coming with an unknown total order to a vector equipped with partial order on its positions, such that the relations of the partial order are satisfied by the mapped elements. The problem admits the selection, multiple selection, sorting, and heap construction problems as special cases. Until recently, it was not known whether there existed a polynomial-time algorithm performing this task using a near-optimal number of comparisons between elements of $T$. By applying (twice) the greedy coloring algorithm and the approximation result of Theorem~\ref{thm:colapxent}, we could reduce, in polynomial time, the problem to a well studied multiple selection problem and solve it with a near-optimal number of comparisons. The reader is referred to \cite{POP_SICOMP} for further details.

\paragraph{Acknowledgment.}
We thank the two anonymous referees for their helpful comments on a previous version of the manuscript.
We are especially grateful to one referee for pointing out an error in a previous
version of the analysis of the greedy algorithm by dual fitting.
This work was supported by the Communaut\'e Fran\c caise de Belgique (projet ARC).

\bibliographystyle{plain}
\bibliography{CiE.bib}

\end{document}